\documentclass[10pt,twocolumn,aps,pra,showpacs,superscriptaddress,floatfix,nofootinbib,longbibliography]{revtex4-1}
\usepackage[T1]{fontenc}
\usepackage{graphicx}
\usepackage{amssymb}
\usepackage{amsmath}
\usepackage{color}
\usepackage{psfrag}
\usepackage{epsfig}
\usepackage{bbm}
\usepackage{bm}
\usepackage[colorlinks=true,linkcolor=blue]{hyperref}
\usepackage[normalem]{ulem}
\usepackage{amsthm}
\usepackage{epstopdf}

\newcommand{\ket}[1]{| #1 \rangle}
\newcommand{\bra}[1]{\langle #1 |}

\newcommand{\beq}{\begin{eqnarray}}
\newcommand{\eeq}{\end{eqnarray}}

\definecolor{nred}{rgb}{0.9,0.1,0.1}
\definecolor{nblack}{rgb}{0,0,0}
\definecolor{nblue}{rgb}{0.2,0.2,0.8}
\definecolor{ngreen}{rgb}{0.2,0.6,0.2}

\newcommand{\SMDI}{{\mathcal{S}^{\text{\tiny MDI}}(\{\sigma_{a|x}\})}}
\newcommand{\SOMDI}{{\mathcal{S}_0^{\text{\tiny MDI}}(\{\sigma_{a|x}\})}}
\newcommand{\EBB}{{E_1^{BB_{0}}}}
\newcommand{\PhiBB}{{\Phi^{BB_{0}}_+}}
\newcommand{\parallelsum}{\mathbin{\!/\mkern-5mu/\!}}

\DeclareMathOperator{\tr}{tr}

\theoremstyle{definition}
\newtheorem{theorem}{Theorem}
\newtheorem{lemma}{Lemma}



\begin{document}

\title{Measurement-device-independent measure of  Einstein-Podolsky-Rosen steering}

\author{Huan-Yu Ku}
\email{huan-yu.ku@riken.jp}
\affiliation{Theoretical Quantum Physics Laboratory, RIKEN Cluster for Pioneering Research, Wako-shi, Saitama 351-0198, Japan}
\affiliation{Department of Physics, National Cheng Kung University, 701 Tainan, Taiwan}
\affiliation{Center for Quantum Frontiers of Research \& Technology, NCKU, 701 Tainan, Taiwan}

\author{Shin-Liang Chen}
\email{shin-liang.chen@mpq.mpg.de}
\affiliation{Max-Planck-Institut f{\"u}r Quantenoptik, Hans-Kopfermann-Stra{\ss}e 1, 85748 Garching, Germany}
\affiliation{Munich Center for Quantum Science and Technology (MCQST), Schellingstra{\ss}e. 4, D-80799 M{\"u}nchen, Germany}
\affiliation{Department of Physics, National Cheng Kung University, 701 Tainan, Taiwan}

\author{Hong-Bin Chen}
\affiliation{Department of Physics, National Cheng Kung University, 701 Tainan, Taiwan}
\affiliation{Center for Quantum Frontiers of Research \& Technology, NCKU, 701 Tainan, Taiwan}

\author{Franco Nori}
\affiliation{Theoretical Quantum Physics Laboratory, RIKEN Cluster for Pioneering Research, Wako-shi, Saitama 351-0198, Japan}
\affiliation{Department of Physics, The University of Michigan, Ann Arbor, 48109-1040 Michigan, USA}

\author{Yueh-Nan Chen}
\email{yuehnan@mail.ncku.edu.tw}
\affiliation{Department of Physics, National Cheng Kung University, 701 Tainan, Taiwan}
\affiliation{Center for Quantum Frontiers of Research \& Technology, NCKU, 701 Tainan, Taiwan}

\date{\today}

\begin{abstract}
Within the framework of quantum refereed steering games (QRSGs), quantum steerability can be certified without any assumption on the underlying state nor the measurements involved. Such a scheme is called measurement-device-independent (MDI) scenario. In this work, we define a variant of QRSGs and introduce a measure of steerability in a MDI scenario, i.e., the result merely depends on the observed data table and the quantum inputs. 
We prove that such a measure is robust against measurement imperfections and show that it is a convex steering monotone by proving the equivalence to the steering fraction as well as the steering robustness. 
Finally, we provide an analytical expression of the measure for a family of two-qubits Werner states. 
\end{abstract}

\maketitle
\emph{Introduction.---}Entanglement~\cite{Einstein35}, steerability~\cite{Schrodinger35}, and Bell nonlocality~\cite{Bell64} are three types of quantum correlations which play essential roles in quantum cryptography, quantum communication, teleportation, and quantum information processing~\cite{Horodecki09RMP,Brunner14RMP,Branciard12}. The fact that Einstein-Podolsky-Rosen (EPR) steering is treated as an intermediate quantum correlation between entanglement and nonlocality leads to a hierarchical relation among them. That is, all nonlocal states are steerable, and all steerable states are entangled, but not vice versa~\cite{Wiseman07,Jones07,Quintino15}. During the past decade, there have been many significant experimental works~\cite{Saunders10,  Bennet12,Handchen12, Smith12,Schneeloch13,Su13,Sun16, Cavalcanti09} and various theoretical results on EPR steering~\cite{Reid89,Pusey13,Walborn11,Kogias15,Costa16}, including the correspondence with measurement incompatibility~\cite{Cavalcanti16,Uola14,Quintino14,Shin-Liang16c,Uola15}, one-way steering~\cite{ Wollmann16,Bowles14}, temporal steering~\cite{Yueh-Nan14,Shin-Liang16,Ku16,Che-Ming15,Ku18b}, continuous-variable EPR steering~\cite{Tatham12,Qiongyi15,Xiang17}, as well as measures of EPR steering~\cite{Skrzypczyk14,Piani15,Hsieh16,Gallego15,Eneet17b,Eneet17a,SDPreview17,Ku18a}.

Bell nonlocality enables one to perform so-called \emph{device-independent} (DI) quantum information processing~\cite{Brunner14RMP,Gallego10,Bancal11,Cavalcanti12,Acin07}, i.e., one makes no assumption on the underlying quantum state nor on the quantum measurements performed. From the hierarchical relation~\cite{Wiseman07}, it naturally leads to the fact that a Bell inequality can be treated as a \emph{DI entanglement witness}. Nevertheless, not all entangled states can be detected by using a Bell inequality violation~\cite{Werner89}. Recently, based on Buscemi's \emph{semi-quantum nonlocal games}~\cite{Francesco12}, Branciard \emph{et al.}~\cite{Branciard13} proposed a collection of entanglement witnesses in the so-called \emph{measurement-device-independent} (MDI) scenario. Compared with the standard DI scenario, there is one more assumption in a MDI scenario: the input of each detector has to be a set of tomographically complete quantum states instead of real numbers. Such a simple relaxation leads to that \emph{all} entangled states can be certified by the proposed MDI entanglement witnesses~\cite{Francesco12,Branciard13}. This characterization gives rise to the recent works providing frameworks for MDI measure of entanglement~\cite{Rosset18b,Shahandeh17,Verbanis16}, non-classical teleportation~\cite{Cavalcanti17}, and non-entanglement-breaking channel verification~\cite{Rosset18}.


Recently, Cavalcanti~\emph{et al.}~\cite{Cavalcanti13} introduced another type of nonlocal game, dubbed as \emph{quantum refereed steering games} (QRSGs). In each of such games, one player is questioned and answers with real numbers, while the other player is questioned with (isolated) quantum states but still answers with real numbers. They showed that there always exists a QRSG with a higher winning probability when the players are correlated by a steerable state~\cite{Cavalcanti13}. Later, Kocsis \emph{et al.}~\cite{Kocsis15} experimentally proposed a QRSG via steering inequality to verify the steerability for the family of two-qubit Werner states in such a scenario, which is also referred to as a MDI scenario.

In this work, we consider a variant of QRSGs, by which we propose the first MDI steering measure (MDI-SM) of the underlying unknown steerable resource without accessing any knowledge of the involved measurement.
We show that \emph{the MDI-SM is a standard measure of steerability, i.e., a convex steering monotone~\cite{Gallego15}}, by proving that it is equivalent to the previous proposed measures: the steering robustness~\cite{Piani15} as well as the steering fraction~\cite{Hsieh16}. Therefore, our proposed measure not only demonstrates the degree of steerability of the underlying steering resource, but also quantifies the degree of entanglement of the sharing quantum state~\cite{Shahandeh17} as well as measurement incompatibility. We note that this is the first time to estimate the degree of measurement incompatibility of the involved measurements in a MDI scenario.
Furthermore, a variant of QRSGs also provides a general method for constructing a collection of MDI steering witnesses for all steerable resources.
We also show that our proposed MDI-SM is robust, in the sense that it is able to detect steerability with detection losses.
Since our approach does not make any assumption on the underlying resource, including the dimension, a high-dimensional MDI-SM is in principle experimentally accessible with current technology in linear optics~\cite{Zhang19}.
The breakthrough of our work is that our proposed measure is the first one which is viable in a MDI scenario, i.e., merely relies on the experimental data table without knowing the full knowledge about the underlying steerable resource, while all the other ones are designed for the standard steering scenario. An explicit example is shown to demonstrate how to implement the proposed measure.



\begin{figure}[tbp]
\includegraphics[width=0.8\columnwidth]{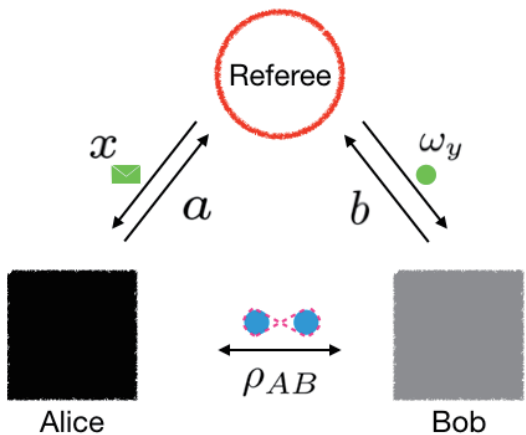}
\caption{Schematic diagrams for the quantum refereed steering games (QRSGs)~\cite{Cavalcanti13}. A QRSG is composed of three spatially separated parties, Alice, Bob, and a referee. Alice and Bob share a quantum state $\rho_{AB}$. The referee encodes the questions within classical variables $\{x\}$ and quantum states $\{\omega_y\}$ to Alice and Bob, respectively. According to the classical inputs $\{x\}$, Alice performs a set of uncharacterized measurements with outcomes $\{a\}$, described by a set of POVMs $\{E_{a|x}\}$. On the other hand, Bob performs a set of uncharacterized joint measurements described by a POVM $\{E_b^{BB_0}\}$ with outcomes $\{b\}$. According to their data table $\{p(a,b|x,\omega_y)\}$, the referee is able to measure the degree of steerability of the underlying unknown steerable resource without trusting Alice's and Bob's measurements. In this work, we consider a variant of QRSGs from the perspective of the resource theory of steering~\cite{Gallego15}, under which the steerable resource is composed of a set of subnormalized states $\{\sigma_{a|x}\}$, called an \emph{assemblage}, produced by Alice's measurements $\{E_{a|x}\}$ acting on the shared state $\rho_{AB}$.
}
\label{fig_EPR_MDI} %
\end{figure}

\emph{Quantum refereed steering games.---}In this work, we assume that all quantum states act on a finite dimensional Hilbert space $\mathcal{H}$. The sets of density matrices and operators acting on $\mathcal{H}$ are denoted by $\mathsf{D}(\mathcal{H})$ and $\mathsf{L}(\mathcal{H})$, respectively.
We denote the index sets of finite number of elements by $\mathcal{A}$, $\mathcal{B}$, $\mathcal{X}$, and $\mathcal{Y}$.
The probability of a specific index, say $a\in\mathcal{A}$, is denoted by $p(a)$. 

A QRSG~\cite{Cavalcanti13} consists of one referee and two players, referred to as Alice and Bob. Besides, Alice and Bob share a quantum state $\rho_{AB}\in \mathsf{D}(\mathcal{H}_{A}\otimes \mathcal{H}_{B})$ and are prohibited to communicate with each other. During each round of the game, Alice receives a classical number $x\in\mathcal{X}$ with probability $p(x)$ as her question from the referee, while Bob receives a quantum state $\omega_y^{B_{0}}\in \mathsf{D}(\mathcal{H}_{B_{0}})$ ($\omega_y$ in short) with probability $p(y)$ (where $y\in\mathcal{Y}$) as his question. To respond to the referee, Alice performs a quantum measurement, described by a POVM $\{E_{a|x}\in \mathsf{L}(\mathcal{H}_{A})\}$, on her part of the system $\text{tr}_B(\rho_{AB})$ and sends the measurement outcome $a\in\mathcal{A}$ as her answer to the referee, while Bob performs a joint quantum measurement, described by a POVM $\{E_{b}^{BB_0}\in\mathsf{L}(\mathcal{H}_{B}\otimes\mathcal{H}_{B_{0}})\}$, on his part of the system $\text{tr}_A(\rho_{AB})$ together with the quantum question $\omega_y$ received from the referee, and sends the output $b\in\mathcal{B}$ as his answer to the referee (see Fig.~\ref{fig_EPR_MDI}). Finally, according to the questions and answers, the referee gives Alice and Bob a payoff $\mu=\mu(a,b,x,\omega_y)$. After many rounds (within the same game), the average payoff they obtain is
\begin{equation}
\tilde{I}(\rho_{AB},\{\mu\})=\sum_{a,b,x,y}p(x)p(y)\mu(a,b,x,\omega_y)p(a,b|x,\omega_y),
\label{payoff}
\end{equation}
where
$p(a,b|x,\omega_y)=\tr\left[ (E_{a|x}\otimes E_b^{BB_{0}})(\rho_{AB}\otimes \omega_y ) \right]$
is the probability distribution of Alice's and Bob's answers according to the questions they receive for all $a,b,x,y$. It was shown that any steerable state allows them to obtain a higher value of the average payoff than the one from unsteerable states~\cite{Cavalcanti13}.

\emph{MDI measure of steerability.---}In this section, we consider a variant of QRSGs.
This will be shown to be helpful in introducing the MDI steering measure (MDI-SM) without knowing the full knowledge about a steerable resource.
We then show it is a standard measure of steerability, i.e., a convex steering monotone~\cite{Gallego15}, by proving that it is equivalent to the steering robustness~\cite{Piani15} and the steering fraction~\cite{Hsieh16}. We also show that the MDI-SM is robust against detection losses.
We stress that the main difference between the standard steering scenario and the MDI scenario is that, the former needs the full knowledge about the underlying steerable resource, while the latter is merely based on the observed statistics $\{p(a,b|x,\omega_y)\}$.

Under the framework of the resource theory of EPR steering~\cite{Gallego15}, the correlation is obtained by Bob's joint measurement $\{E_b^{BB_{0}}\}$ acting on an \emph{assemblage} $\{\sigma_{a|x}\}$ via
\begin{equation}
p(a,b|x,\omega_y)=\text{tr}(E_b^{BB_{0}} \sigma_{a|x}\otimes\omega_y).
\label{pabxy}
\end{equation}
An assemblage $\{\sigma_{a|x}\}$ is a set of subnormalized quantum states defined by $\sigma_{a|x}=\text{tr}_A(\rho_{AB}~E_{a|x}\otimes\openone)$~\cite{Pusey13}, which includes both the information of Alice's marginal statistics $p(a|x)=\text{tr}(\sigma_{a|x})$ and the normalized states $\hat{\sigma}_{a|x}=\sigma_{a|x}/p(a|x)\in \mathsf{D}(\mathcal{H}_{B})$ Bob receives. An assemblage is said to be \emph{unsteerable} if it admits a local-hidden-state (LHS) model~\cite{Wiseman07}: $\sigma_{a|x}=\sigma^{\text{US}}_{a|x}=\sum_{\lambda}p(\lambda)p(a|x,\lambda)\sigma_\lambda~\forall~a,x;$ otherwise, it is steerable. In particular, the set of all unsteerable assemblages $\mathsf{LHS}$ forms a convex set~\cite{Skrzypczyk14,Gallego15}; consequently, for a given steerable assemblage $\{\sigma_{a|x}^{\rm{S}}\}$, there always exists a set of positive semidefinite operators $\{F_{a|x}\geq 0\}$, called a \emph{steering witness}, such that $\tr\sum_{a,x}F_{a|x}\sigma_{a|x}^{\rm{S}} > \alpha$, while $\tr\sum_{a,x}F_{a|x}\sigma_{a|x}^{\rm{US}} \leq \alpha \quad \forall \{\sigma_{a|x}^{\rm{US}}\}\in\mathsf{LHS}$~\cite{Cavalcanti09,Pusey13,SDPreview17,Skrzypczyk14,Piani15}, where $\alpha:= \max_{\{\sigma_{a|x}^{\rm{US}}\}\in\mathsf{LHS}} \tr \sum_{a,x} F_{a|x}\sigma_{a|x}^{\rm{US}}$ is the local bound of the steering witness.

In what follows, we will use the property of the existence of a steering witness for a steerable assemblage to construct the MDI-SM.
First, we map a QRSG to a set of real numbers $\beta=\{\beta_{a,1}^{x,y}\}$ by choosing the relation $\mu(a,b,x,y)=\beta^{x,y}_{a,b}\,\delta_{1,b}/[p(x)p(y)]$~\cite{Shahandeh17}. Therefore, the average payoff of such a variant of a QRSG can be written as
\begin{equation}
I\left(\mathbf{P},\beta\right)=\sum_{a,x,y}\beta^{x,y}_{a,1}p(a,1|x,\omega_y),
\label{payoff2}
\end{equation}
where $\mathbf{P}:= \{p(a,1|x,\omega_y)\}$ is the observed statistics from an unknown assemblage $\{\sigma_{a|x}\}$ according to Eq.~\eqref{pabxy}.
One notes that the average payoff in Eq.~\eqref{payoff2} cannot only be seen as a generalization of the standard Bell inequalities (see Ref.~\cite{Branciard13} for a similar formulation in the entanglement scenario), but also be used to generalize the result of Ref.~\cite{Kocsis15}, wherein the family of two-qubits Werner state is explicitly considered. Additionally, we prove that, for any given steerable assemblage, there always exists a variant of a QRSG $\beta=\{\beta_{a,1}^{x,y}\}$ such that the corresponding average payoff in Eq.~\eqref{payoff2} is strictly higher than those caused by unsteerable ones. The proof is given in Section A of the Supplementary Material~\cite{suppl}.




Now, we stand in the position to introduce the MDI-SM for an unknown assemblage $\{\sigma_{a|x}\}$, denoted by
\begin{equation}
\SMDI:=\max\left\{\SOMDI-1,0\right\},
\label{Eq_SMDI}
\end{equation}
with
\begin{equation}
\SOMDI := \sup_{\beta,\mathbf{P}}\frac{I(\mathbf{P},\beta)}{I_{\text{LHS}}(\beta)},
\label{Eq_SOMDI}
\end{equation}
where $I_{\text{LHS}}(\beta)=\sup_{\mathbf{P}\in\mathsf{LHS'}}I(\mathbf{P},\beta)$ is the maximal payoff from unsteerable assemblages
for a given variant QRSG $\beta$. Importantly, the supremum over $\beta$ and $\mathbf{P}$ in Eq.~\eqref{Eq_SOMDI} are independently performed by the referee and Bob, respectively.
The former needs the referee to choose the optimal $\beta$ satisfying the spanned relation
\begin{equation}
F_{a|x} = \sum_y \beta_{a,1}^{x,y}\omega_y^{\intercal}
\label{Eq_span_relation}
\end{equation}
for positive semidefinite operators $\{F_{a|x}\geq 0\}$ since $\{\omega_y\}$ forms a tomography complete set.
On the other hand, Bob's optimization is carried out by subtly choosing a proper measurement, described by POVMs $E_1^{BB_{0}}$, and $\openone-E_1^{BB_{0}}\}$.
With Eq.~\eqref{Eq_span_relation}, Eq.~\eqref{Eq_SOMDI} can be reformulated as
\begin{equation}
\begin{aligned}
&\SOMDI=\\
& \sup_{\mathbf{F}\geq 0, E_1}\frac{\sum_{a,x}\tr\left[\EBB(\sigma_{a|x}\otimes F_{a|x}^\intercal)\right]}{\sup_{\tau\in\mathsf{LHS}}~ \sum_{a,x}\tr\left[\EBB(\tau_{a|x}\otimes F_{a|x}^\intercal)\right]},
\end{aligned}
\label{Eq_SOMDI2}
\end{equation}
where $\mathbf{F}$, $E_1$, and $\tau$, respectively, denote $\{F_{a|x}\}$, $E_1^{BB_0}$, and $\{\tau_{a|x}\}$ for brevity.

The optimization problem over Bob's measurement operator $E_1^{BB_{0}}$ is addressed in the following Lemma by resorting the projection onto the maximally entangled state.
\begin{lemma}\label{lemma:opt_measure}
The supremum over $E_1$ in Eq.~\eqref{Eq_SOMDI2} is always achieved if $\EBB$ is the projection onto the maximally entangled state, i.e., $\EBB=|\PhiBB\rangle\langle\PhiBB|$, with $|\PhiBB\rangle=1/\sqrt{d_{B}}\sum_{i=1}^{d_{B}}|i\rangle\otimes|i\rangle$. Moreover, it is independent of the chosen tomographically complete set $\{\omega_y\}$.
\end{lemma}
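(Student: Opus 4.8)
The plan is to absorb Bob's measurement operator $\EBB$ into an effective steering witness, thereby reducing the optimisation in Eq.~\eqref{Eq_SOMDI2} to one over witnesses alone, and then to verify that the particular choice $\EBB=\op{\PhiBB}{\PhiBB}$ already realises this reduced optimisation. I would first observe that the ratio in Eq.~\eqref{Eq_SOMDI2} is invariant under $\EBB\mapsto\lambda\EBB$ for any $\lambda>0$, so the POVM constraint $\EBB\leq\openone$ plays no role and it suffices to range over all nonzero positive $\EBB$.

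For a fixed pair $(\mathbf{F},\EBB)$ with $F_{a|x}\geq 0$, I would introduce the operators on $\mathcal{H}_B$
\begin{equation}
\tilde F_{a|x}:=\tr_{B_0}\!\left[\EBB\,(\openone_B\otimes F_{a|x}^{\intercal})\right].
\end{equation}
Testing against an arbitrary $\rho\in\mathsf{L}(\mathcal{H}_B)$ and using cyclicity of the full trace yields the absorption identity $\tr[\EBB(\rho\otimes F_{a|x}^{\intercal})]=\tr[\rho\,\tilde F_{a|x}]$; choosing $\rho\geq 0$ and symmetrising gives $\tr[\rho\,\tilde F_{a|x}]=\tr[(\rho^{1/2}\otimes (F_{a|x}^{\intercal})^{1/2})\,\EBB\,(\rho^{1/2}\otimes (F_{a|x}^{\intercal})^{1/2})]\geq 0$, so each $\tilde F_{a|x}\geq 0$ and $\{\tilde F_{a|x}\}$ is a legitimate steering witness. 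Applying the absorption identity with $\rho=\sigma_{a|x}$ in the numerator and $\rho=\tau_{a|x}$ in the denominator shows that, for this fixed $(\mathbf{F},\EBB)$, the ratio in Eq.~\eqref{Eq_SOMDI2} equals
\begin{equation}
\frac{\sum_{a,x}\tr[\sigma_{a|x}\tilde F_{a|x}]}{\sup_{\tau\in\mathsf{LHS}}\sum_{a,x}\tr[\tau_{a|x}\tilde F_{a|x}]},
\end{equation}
which no longer depends on $\EBB$ other than through $\tilde{\mathbf{F}}$.

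Next I would evaluate the ratio at the candidate $\EBB=\op{\PhiBB}{\PhiBB}$. Using the maximally-entangled-state identity $\tr[(M\otimes N)\op{\PhiBB}{\PhiBB}]=\tfrac{1}{d_B}\tr[M N^{\intercal}]$, the numerator of Eq.~\eqref{Eq_SOMDI2} becomes $\tfrac{1}{d_B}\sum_{a,x}\tr[\sigma_{a|x}F_{a|x}]$ and the denominator $\tfrac{1}{d_B}\sup_{\tau\in\mathsf{LHS}}\sum_{a,x}\tr[\tau_{a|x}F_{a|x}]$, so the factor $1/d_B$ cancels and the ratio reduces to $\sum_{a,x}\tr[\sigma_{a|x}F_{a|x}]$ divided by $\sup_{\tau}\sum_{a,x}\tr[\tau_{a|x}F_{a|x}]$. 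Comparing the two computations, any value of the ratio attainable with some $(\mathbf{F},\EBB)$ is already attained by taking $\EBB=\op{\PhiBB}{\PhiBB}$ together with the witness $\tilde{\mathbf{F}}$; hence restricting $\EBB$ to the maximally entangled projector leaves the supremum in Eq.~\eqref{Eq_SOMDI2} unchanged. Finally, once $\EBB$ is fixed to this projector, the remaining optimisation over $\mathbf{F}$ contains no reference to the quantum inputs except through the span relation Eq.~\eqref{Eq_span_relation}; since $\{\omega_y\}$ is tomographically complete, $\{\omega_y^{\intercal}\}$ spans all of $\mathsf{L}(\mathcal{H}_{B_0})$, so every $\mathbf{F}\geq 0$ is reachable irrespective of the particular tomographically complete set, which gives the claimed independence.

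I expect the step needing the most care to be the positivity $\tilde F_{a|x}\geq 0$: since $\EBB(\openone_B\otimes F_{a|x}^{\intercal})$ is generally non-Hermitian, one cannot read off positivity of its partial trace directly, and it is the symmetrised rewriting $\tr_{B_0}[(\openone_B\otimes (F_{a|x}^{\intercal})^{1/2})\,\EBB\,(\openone_B\otimes (F_{a|x}^{\intercal})^{1/2})]$ above that makes the argument work. The rest is routine bookkeeping with the maximally-entangled-state identity and tomographic completeness.
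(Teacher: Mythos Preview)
Your argument is correct. Both your proof and the paper's reduce the joint optimisation over $(\mathbf{F},E_1)$ to one over positive witnesses alone, but by different mechanisms. The paper first expresses any admissible $\EBB$ as a separable operation on the maximally entangled projector, $\EBB=\sum_{k,i}(A^k_i\otimes B^k_i)\op{\PhiBB}{\PhiBB}(A^{k\dagger}_i\otimes B^{k\dagger}_i)$, and then pushes the Kraus operators through the trace via $\bra{\Phi}M\otimes N\ket{\Phi}=\tfrac{1}{d_B}\tr[MN^{\intercal}]$ to obtain the effective witness $F'_{a|x}=\sum_{k,i}A^k_i B^{k\intercal}_i F_{a|x}B^{k\dagger\intercal}_i A^{k\dagger}_i$, whose positivity is manifest from the Kraus form. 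You instead note the scale invariance of the ratio, define $\tilde F_{a|x}=\tr_{B_0}[\EBB(\openone_B\otimes F_{a|x}^{\intercal})]$ directly by partial trace, and establish positivity through the symmetrised rewriting $\tr_{B_0}[(\openone_B\otimes (F_{a|x}^{\intercal})^{1/2})\,\EBB\,(\openone_B\otimes (F_{a|x}^{\intercal})^{1/2})]$. Your route is more elementary---no appeal to Choi--Jamio{\l}kowski or to the existence of a separable decomposition---and it makes explicit that the POVM upper bound $\EBB\leq\openone$ is irrelevant; the paper's Kraus form, on the other hand, has the small advantage that positivity of the effective witness is immediate without an auxiliary identity. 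The treatment of independence from $\{\omega_y\}$ via tomographic completeness is essentially the same in both.
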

The proof is given in Section B of the Supplementary Material~\cite{suppl}. With the help of Lemma~\ref{lemma:opt_measure}, we arrive at the main result of this work below:
\begin{theorem}\label{theorem:MDI_to_SF}
The proposed MDI-SM $\SMDI$ in Eq.~(\ref{Eq_SMDI}) is a standard measure of steerability, i.e., a convex steering monotone~\cite{Gallego15}, due to the equivalence to the steering fraction~\cite{Hsieh16} as well as the steering robustness~\cite{Piani15}.
\end{theorem}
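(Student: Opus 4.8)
The plan is to start from the reformulated ratio in Eq.~\eqref{Eq_SOMDI2} and collapse it, step by step, onto a known semidefinite-programming (SDP) quantifier. First I would invoke Lemma~\ref{lemma:opt_measure} to replace Bob's joint measurement operator by the maximally-entangled-state projector, $\EBB=|\PhiBB\rangle\langle\PhiBB|$. The elementary identity doing the work is that, for operators $X$ on $\mathcal{H}_B$ and $Y$ on $\mathcal{H}_{B_0}$ (both of dimension $d_B$), $\langle\PhiBB|\,X\otimes Y\,|\PhiBB\rangle=\tr(XY^{\intercal})/d_B$. Applying this with $X=\sigma_{a|x}$ and $Y=F_{a|x}^{\intercal}$ makes the transpose coming from $|\PhiBB\rangle$ cancel the transpose appearing through the span relation Eq.~\eqref{Eq_span_relation}, so each numerator term becomes $\tr(F_{a|x}\sigma_{a|x})/d_B$, and each local-bound term becomes $\tr(F_{a|x}\tau_{a|x})/d_B$. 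The common factor $1/d_B$ then cancels in the ratio, giving
\begin{equation}
\SOMDI=\sup_{\mathbf{F}\geq 0}\frac{\sum_{a,x}\tr\!\left(F_{a|x}\sigma_{a|x}\right)}{\sup_{\tau\in\mathsf{LHS}}\sum_{a,x}\tr\!\left(F_{a|x}\tau_{a|x}\right)} .
\end{equation}

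At this point I would record two bookkeeping facts. First, restricting $F_{a|x}$ to the real span of $\{\omega_y^{\intercal}\}$ via Eq.~\eqref{Eq_span_relation} costs nothing: since $\{\omega_y\}$ is tomographically complete, $\{\omega_y^{\intercal}\}$ spans $\mathsf{L}(\mathcal{H}_B)$, so every Hermitian $F_{a|x}\geq 0$ is reached by some real coefficients $\{\beta_{a,1}^{x,y}\}$, and conversely every such coefficient family produces a valid steering witness. Second, after the same substitution the local payoff $I_{\mathrm{LHS}}(\beta)$ of Eq.~\eqref{Eq_SOMDI} equals the denominator above, because maximizing the payoff over unsteerable statistics $\mathbf{P}$ is exactly maximizing over assemblages $\{\tau_{a|x}\}\in\mathsf{LHS}$. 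The right-hand side is then, by definition, the steering fraction of $\{\sigma_{a|x}\}$ of Ref.~\cite{Hsieh16}; hence $\SOMDI=\mathcal{SF}(\{\sigma_{a|x}\})$ and $\SMDI=\max\{\mathcal{SF}(\{\sigma_{a|x}\})-1,0\}$.

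The last link is the identity $\mathcal{SF}=1+\mathrm{SR}$, with $\mathrm{SR}$ the steering robustness of Ref.~\cite{Piani15}, which I would prove (or cite, if already available there or in Ref.~\cite{Hsieh16}) by SDP duality: write $1+\mathrm{SR}$ as the primal minimizing $\tr\sum_\lambda\tilde\sigma_\lambda$ over positive operators $\{\tilde\sigma_\lambda\}$ subject to $\sum_\lambda D(a|x,\lambda)\tilde\sigma_\lambda\geq\sigma_{a|x}$ for all $a,x$ (with $D(a|x,\lambda)$ the deterministic response functions), and take its Lagrange dual; the dual variables form a steering witness $\{F_{a|x}\geq 0\}$ normalized by $\sum_{a,x}D(a|x,\lambda)F_{a|x}\leq\openone$ for every $\lambda$ — equivalently $\sup_{\tau\in\mathsf{LHS}}\sum_{a,x}\tr(F_{a|x}\tau_{a|x})\leq1$ — with dual objective $\sum_{a,x}\tr(F_{a|x}\sigma_{a|x})$, so strong duality (Slater's condition holds on steerable assemblages) yields $1+\mathrm{SR}=\mathcal{SF}$. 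Since $\mathcal{SF}\geq1$ always, with equality iff $\{\sigma_{a|x}\}\in\mathsf{LHS}$, and $\mathrm{SR}\geq0$, this gives $\SMDI=\mathrm{SR}$ for every assemblage. Because the steering robustness is a convex steering monotone under one-way local operations assisted by shared randomness~\cite{Gallego15,Piani15}, the equivalence transfers convexity, monotonicity and faithfulness (vanishing exactly on $\mathsf{LHS}$) to $\SMDI$, which proves the theorem.

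I expect the main obstacle to be the careful handling of the first step rather than the SDP duality, which is essentially standard: one must verify that the double supremum in Eq.~\eqref{Eq_SOMDI} — the referee's optimization over the variant game $\beta$ and Bob's optimization over the observed statistics $\mathbf{P}$ — collapses, after the substitutions $F_{a|x}=\sum_y\beta_{a,1}^{x,y}\omega_y^{\intercal}$ and $\EBB=|\PhiBB\rangle\langle\PhiBB|$, precisely onto the single supremum over steering witnesses $\mathbf{F}\geq0$, with the denominator matching the steering-witness local bound and no spurious constraint introduced by the tomographic-completeness requirement or by the transposition. Once that identification is pinned down, the remainder is the routine SDP-duality computation and a citation of the monotone properties of $\mathrm{SR}$.
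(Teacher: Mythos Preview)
Your proposal is correct and follows essentially the same route as the paper: invoke Lemma~\ref{lemma:opt_measure} to fix $\EBB=|\PhiBB\rangle\langle\PhiBB|$, use the identity $\langle\PhiBB|X\otimes Y|\PhiBB\rangle=\tr(XY^{\intercal})/d_B$ to collapse Eq.~\eqref{Eq_SOMDI2} onto the steering fraction, and then identify the latter with $1+\mathrm{SR}$. The only minor difference is in this last step: you sketch the standard Lagrange-dual derivation of $1+\mathrm{SR}=\mathcal{SF}$, whereas the paper (Section~C of the supplement) verifies directly that the rescaled witnesses $\tilde F_{a|x}=F_{a|x}/\max_{\tau\in\mathsf{LHS}}\tr\sum F_{a|x}\tau_{a|x}$ satisfy the dual constraints $\sum_{a,x}D(a|x,\lambda)\tilde F_{a|x}\leq\openone$ by pairing against arbitrary subnormalized $\rho_\lambda$; both arguments are equivalent and standard.
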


\begin{proof}
It is easy to verify that $\SOMDI$ can achieve the steering fraction \cite{Hsieh16} when considering $\EBB$ in Eq.~\eqref{Eq_SOMDI2} to be the projection onto the maximally entangled state $\ket{\PhiBB}$, i.e.,
\begin{equation}
\begin{aligned}
&\SOMDI\Big|_{\EBB=\ket{\PhiBB}\bra{\PhiBB}}\\
&=\sup_{\mathbf{F}\geq 0}\frac{\sum_{a,x}\bra{\PhiBB}\sigma_{a|x}\otimes F_{a|x}^\intercal\ket{\PhiBB}}{\sup_{\tau\in\mathsf{LHS}}~ \sum_{a,x}\bra{\PhiBB}\tau_{a|x}\otimes F_{a|x}^\intercal\ket{\PhiBB}}\\
&=\sup_{\mathbf{F}\geq 0}\frac{\sum_{a,x}\tr\left[\sigma_{a|x} F_{a|x} \right]}{\sup_{\tau\in\mathsf{LHS}}\sum_{a,x}\tr\left[\tau_{a|x}F_{a|x}\right]}.
\end{aligned}
\label{Eq_MDISMandSF}
\end{equation}
The last quantity is exactly the steering fraction in Ref.~\cite{Hsieh16}.
We leave the proof of the equivalence between the steering fraction and the steering robustness in Section C of the Supplementary Material~\cite{suppl}.
\end{proof}

We have explicitly shown how to achieve the optimal $\mathbf{P}$ in Eq.~\eqref{Eq_SOMDI} with Lemma~\ref{lemma:opt_measure}.
However, it is not straightforward to obtain a general form of the optimal variant of QRSGs, i.e., $\{\beta_{a,1}^{x,y}\}$ in Eq.~\eqref{Eq_SOMDI}. In Section D of the Supplementary Material~\cite{suppl}, we provide an algorithmic method to construct a valid set $\{\beta_{a,1}^{x,y}\}$. The idea is to consider a target assemblage to be measured. The optimal steering witness $\{F_{a|x}\}$ can be obtained by the dual form of the semidefinite program of the steering robustness. Then, a valid set $\{\beta_{a,1}^{x,y}\}$ can be chosen by the spanned relation~\eqref{Eq_span_relation}. We note that obtaining an optimal \emph{semi-quantum nonlocal game} is in general a hard problem~\cite{Francesco12,Shahandeh17}, and the compromising way is that one makes some assumptions on the entanglement structure of the underlying state~\cite{Shahandeh17}. However, our result shows that, in the variant of QRSGs, there is no assumption on the structure of steerability of the underlying assemblage to obtain an optimal $\{\beta_{a,1}^{x,y}\}$.


Now, we would like to show that the MDI-SM is robust against detection losses. To see this, we consider the loss rate of Bob's measurement $\eta\in[0,1]$. The observed correlation in this case is $p_\eta(a,b|x,\tau_y)=\eta\cdot p(a,b|x,\tau_y)$, shrinking the MDI-SM by $\eta$, i.e., $\eta\cdot\SMDI$. As one can see, the MDI-SM is able to detect steerability in a MDI scenario with arbitrary detection losses and provide a lower bound on the steerability of the underlying assemblage (see Ref.~\cite{Verbanis16,Branciard13} for similar discussions in the MDI entanglement scenario.)

\emph{Example.---}In the following, we give an explicit example to analytically compute the proposed MDI-SM. We consider that Alice and Bob share the family of two-qubit Werner states, namely
\begin{equation}
\rho_{AB}^{v}=v\ket{\Phi^{-}}\bra{\Phi^{-}}+(1-v)\openone/4\quad~~ v\in [0,1],
\end{equation}
where $\ket{\Phi^{-}}=(\ket{10}-\ket{01})/\sqrt{2}$ is the singlet state and $v$ is the visibility ($0\leq v \leq 1$).
We consider the simplest case where Alice receives two classical inputs $x\in\{1,2\}$ from the referee. She performs two measurement settings in the bases of Pauli $X$ and $Z$. These two measurements create an assemblage with the maximum steerability in the scenario of Alice holding two measurment settings~\cite{Skrzypczyk14}, and the underlying assemblage is steerable as $v> 1/\sqrt{2}$.
Obtaining an assemblage through such a setting, one can obtain an optimal steering witness
\begin{equation}
F_{a|x}=\frac{1}{2+\sqrt{2}}[\openone+(-1)^a\sigma_x]\quad\forall a,x,
\end{equation}
by solving the optimazation problem (see Section E of the Supplementary Material~\cite{suppl} for the derivation), where $\sigma_{x=1}=X$ and $\sigma_{x=2}=Z$.

On the other hand, the tomographyically complete set $\{\omega_y\}$ Bob receives can be chosen as the eigenstates of the three Pauli matrices.
Then, through the spanned relation $F_{a|x}=\sum_y \beta_{a,1}^{x,y}\omega_y^{\intercal}$, a valid optimal set $\{\beta^{x,y}_{a,1}\}$ can be chosen as $\beta^{x,y}_{a,1}=2/(2+\sqrt{2})$ for $(a,x,y)=(1,1,1), (2,1,2), (1,2,3), (2,2,4)$, and $\beta^{x,y}_{a,1}=0$ otherwise.
By projecting Bob's joint systems onto the maximally entangled state $\ket{\Phi^{+}}=(\ket{11}+\ket{00})/\sqrt{2}$, the set of probability distributions $\{p(a,b|x,\omega_y)\}$ as well as the MDI-SM $\SMDI$ are obtained. The result of MDI-SM for the family of two-qubit Werner states is shown in Fig.~\ref{Werner_states}. Note that there are different ways to choose the set $\beta$, as long as the spanned relation is satisfied. We show another possible optimal set in Table I of Section E of the Supplementary Material~\cite{suppl}.

\begin{figure}[tbp]
\includegraphics[width=1\columnwidth]{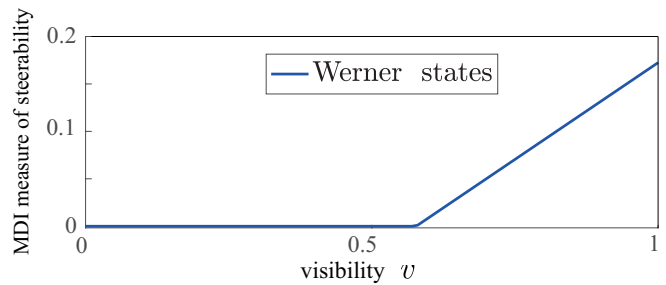}
\caption{The MDI-SMs $\SMDI$ of the assemblage of the two-qubit Werner state vesus the visibility $v$ by considering the scenario $(\mathcal{X},\mathcal{Y},\mathcal{A},\mathcal{B})=(2,6,2,2)$. The measure detects the steerability of the Werner states when the visibility $v\geq 1/\sqrt{2}$, which is exactly the bound of the visibility of the two-qubit Werner state with two projective-measurement settings on Alice's side. As shown in the main text, the value of the MDI-SM is equivalent to the steering robustness.
}
\label{Werner_states} %
\end{figure}

\emph{Concluding Remarks.---}In this work, we consider a variant of quantum refereed steering games (QRSGs), by which we introduce a measure of steerability in a measurement-device-independent (MDI) scenario, i.e., without making assumptions on the involved measurements nor the underlying assemblage. The only characterized quantities are the observed statistics and a tomographycially complete set of quantum states for Bob. Through this, all steerable assemblages can be witnessed, in contrast to the fact that only a subset of steerable assemblages can be detected in the standard device-independent (DI) scenario. 
We further show that it is a convex steering monotone by proving the equivalence with the steering fraction as well as the steering robustness. 
Therefore, the MDI-SM can be used to estimate the degree of entanglement of the unknown quantum state and measurement incompatibility of the involved measurements~\cite{Cavalcanti16}.
To our knowledge, this is the first work, which not only provides a \emph{computable measure of steerability based only on the observed statistics but also estimates the degree of measurement incompatibility of the involved measurements in a MDI scenario.} Additionally, our approach is able to detect steerability in a MDI scenario with arbitrary detection losses and provide a lower bound on the steerability of the underlying assemblage.

Moreover, we tackle the two optimization problems in Eq.~\eqref{Eq_SOMDI}, or equivalently, find the optimal strategies for the variant of QRSGs. The first is the problem of obtaining the general form of the optimal measurement. At the first glance, it seems to be a difficult problem since Bob has to optimize over all possible measurements. However, we show that the projection onto the maximally entangled state is always an optimal one for any steerable resource. 
Very recently, an arbitrary two-particle high-dimensional Bell state measurement has been proposed~\cite{Zhang19}, and it is expected that a high-dimensional MDI steering measure can be experimentally implemented with linear optics using current technology.
The second is the problem of obtaining the optimal game. Unlike other two types of generalized nonlocal games, i.e., semi-quantum nonlocal games~\cite{Francesco12} and the standard QRSGs~\cite{Cavalcanti13}, where the general optimal game for a given state is hard to be formulated~\cite{Shahandeh17}, the optimal game of the variant of QRSGs, which we consider, is easy to find.

This work also reveals some open questions: To calculate the value of the MDI-SM, or to obtain a MDI steering witness for an unknown steerable assemblage, can one directly estimate an optimal set of coefficients $\beta$, instead of obtaining it through the standard steering witness? (such as the approach used in~\cite{Verbanis16,Rosset18b}). 
Can we generally estimate the degree of steerability or entanglement when considering the effects of imperfections on the state preparation~\cite{Kocsis15,Jeon19,Rosset18b}?
It is also interesting to investigate whether our method can be modified to detect or measure all steerable assemblages in a standard DI scenario with the novel approach recently proposed in Ref,~\cite{Bowles18}. Since the formulation of the standard steering scenario can be applied to certify the security of the quantum keys~\cite{Branciard12}, one can ask if it is also the case in the MDI scenario.

H.-Y.K. and S.-L.C. contribute equally to this work. The authors acknowledge fruitful discussions with Francesco Buscemi, Ana Cristina Sprotte Costa, Yeong-Cherng Liang, Jeng-Dong Lin, Chau Nguyen, Paul Skrzypczyk, Adam Miranowicz, and Roope Uola. H.-Y.K. acknowledges the support of the Graduate Student Study Abroad Program (Grant No.~MOST 107-2917-I-006-002). S.-L.C. acknowledges the host by the group of theoretical quantum optics at the University of Siegen and the support from Postdoctoral Research Abroad Program (Grant No. MOST 107-2917-I-564-007) and from Deutsche Forschungsgemeinschaft (DFG, German Research Foundation) under Germany's Excellence Strategy -- EXC-2111--390814868. This work is supported partially by the National Center for Theoretical Sciences and Ministry of Science and Technology, Taiwan, Grants No.~MOST 107-2628-M-006-002-MY3, MOST 107-2811-M-006-017, and MOST 107-2627-E-006-001, and the Army Research Office (Grant No.~W911NF-19-1-0081). F.N. is supported in part by the MURI Center for Dynamic Magneto-Optics via the Air Force Office of Scientific Research (AFOSR) (FA9550-14-1-0040), Army Research Office (ARO) (Grant No.~W911NF-18-1-0358), Asian Office of Aerospace Research and Development (AOARD) (Grant No.~FA2386-18-1-4045), Japan Science and Technology Agency (JST) (Q-LEAP program, ImPACT program, and CREST Grant No. JPMJCR1676), Japan Society for the Promotion of Science (JSPS) (JSPS-RFBR Grant No. 17-52-50023, and JSPS-FWO Grant No. VS.059.18N), RIKEN-AIST Challenge Research Fund, and the John Templeton Foundation.

\emph{Note added:} after this work was submitted, experiments (arXiv: 1901.08298~\cite{Zhao2019}) have already verified the main prediction of this work.

\bibliography{ellipsoid_refs}

\begin{thebibliography}{64}%
\makeatletter
\providecommand \@ifxundefined [1]{%
 \@ifx{#1\undefined}
}%
\providecommand \@ifnum [1]{%
 \ifnum #1\expandafter \@firstoftwo
 \else \expandafter \@secondoftwo
 \fi
}%
\providecommand \@ifx [1]{%
 \ifx #1\expandafter \@firstoftwo
 \else \expandafter \@secondoftwo
 \fi
}%
\providecommand \natexlab [1]{#1}%
\providecommand \enquote  [1]{``#1''}%
\providecommand \bibnamefont  [1]{#1}%
\providecommand \bibfnamefont [1]{#1}%
\providecommand \citenamefont [1]{#1}%
\providecommand \href@noop [0]{\@secondoftwo}%
\providecommand \href [0]{\begingroup \@sanitize@url \@href}%
\providecommand \@href[1]{\@@startlink{#1}\@@href}%
\providecommand \@@href[1]{\endgroup#1\@@endlink}%
\providecommand \@sanitize@url [0]{\catcode `\\12\catcode `\$12\catcode
  `\&12\catcode `\#12\catcode `\^12\catcode `\_12\catcode `\%12\relax}%
\providecommand \@@startlink[1]{}%
\providecommand \@@endlink[0]{}%
\providecommand \url  [0]{\begingroup\@sanitize@url \@url }%
\providecommand \@url [1]{\endgroup\@href {#1}{\urlprefix }}%
\providecommand \urlprefix  [0]{URL }%
\providecommand \Eprint [0]{\href }%
\providecommand \doibase [0]{http://dx.doi.org/}%
\providecommand \selectlanguage [0]{\@gobble}%
\providecommand \bibinfo  [0]{\@secondoftwo}%
\providecommand \bibfield  [0]{\@secondoftwo}%
\providecommand \translation [1]{[#1]}%
\providecommand \BibitemOpen [0]{}%
\providecommand \bibitemStop [0]{}%
\providecommand \bibitemNoStop [0]{.\EOS\space}%
\providecommand \EOS [0]{\spacefactor3000\relax}%
\providecommand \BibitemShut  [1]{\csname bibitem#1\endcsname}%
\let\auto@bib@innerbib\@empty
\bibitem [{\citenamefont {Einstein}\ \emph {et~al.}(1935)\citenamefont
  {Einstein}, \citenamefont {Podolsky},\ and\ \citenamefont
  {Rosen}}]{Einstein35}%
  \BibitemOpen
  \bibfield  {author} {\bibinfo {author} {\bibfnamefont {A.}~\bibnamefont
  {Einstein}}, \bibinfo {author} {\bibfnamefont {B.}~\bibnamefont {Podolsky}},
  \ and\ \bibinfo {author} {\bibfnamefont {N.}~\bibnamefont {Rosen}},\
  }\bibfield  {title} {\enquote {\bibinfo {title} {Can quantum-mechanical
  description of physical reality be considered complete?}}\ }\href
  {http://link.aps.org/doi/10.1103/PhysRev.47.777} {\bibfield  {journal}
  {\bibinfo  {journal} {Phys. Rev.}\ }\textbf {\bibinfo {volume} {47}},\
  \bibinfo {pages} {777--780} (\bibinfo {year} {1935})}\BibitemShut {NoStop}%
\bibitem [{\citenamefont {Schr\"odinger}(1935)}]{Schrodinger35}%
  \BibitemOpen
  \bibfield  {author} {\bibinfo {author} {\bibfnamefont {E.}~\bibnamefont
  {Schr\"odinger}},\ }\bibfield  {title} {\enquote {\bibinfo {title}
  {Discussion of probability relations between separated systems},}\ }\href
  {http://journals.cambridge.org/article_S0305004100013554} {\bibfield
  {journal} {\bibinfo  {journal} {Proc. Cambridge Phil. Soc.}\ }\textbf
  {\bibinfo {volume} {31}},\ \bibinfo {pages} {555} (\bibinfo {year}
  {1935})}\BibitemShut {NoStop}%
\bibitem [{\citenamefont {Bell}(1964)}]{Bell64}%
  \BibitemOpen
  \bibfield  {author} {\bibinfo {author} {\bibfnamefont {J.~S.}\ \bibnamefont
  {Bell}},\ }\bibfield  {title} {\enquote {\bibinfo {title} {On the
  {E}instein-{P}odolsky-{R}osen paradox},}\ }\href
  {http://www.drchinese.com/David/Bell_Compact.pdf} {\bibfield  {journal}
  {\bibinfo  {journal} {Physics}\ }\textbf {\bibinfo {volume} {1}},\ \bibinfo
  {pages} {195--200} (\bibinfo {year} {1964})}\BibitemShut {NoStop}%
\bibitem [{\citenamefont {Horodecki}\ \emph {et~al.}(2009)\citenamefont
  {Horodecki}, \citenamefont {Horodecki}, \citenamefont {Horodecki},\ and\
  \citenamefont {Horodecki}}]{Horodecki09RMP}%
  \BibitemOpen
  \bibfield  {author} {\bibinfo {author} {\bibfnamefont {R.}~\bibnamefont
  {Horodecki}}, \bibinfo {author} {\bibfnamefont {P.}~\bibnamefont
  {Horodecki}}, \bibinfo {author} {\bibfnamefont {M.}~\bibnamefont
  {Horodecki}}, \ and\ \bibinfo {author} {\bibfnamefont {K.}~\bibnamefont
  {Horodecki}},\ }\bibfield  {title} {\enquote {\bibinfo {title} {Quantum
  entanglement},}\ }\href {http://link.aps.org/doi/10.1103/RevModPhys.81.865}
  {\bibfield  {journal} {\bibinfo  {journal} {Rev. Mod. Phys.}\ }\textbf
  {\bibinfo {volume} {81}},\ \bibinfo {pages} {865--942} (\bibinfo {year}
  {2009})}\BibitemShut {NoStop}%
\bibitem [{\citenamefont {Brunner}\ \emph {et~al.}(2014)\citenamefont
  {Brunner}, \citenamefont {Cavalcanti}, \citenamefont {Pironio}, \citenamefont
  {Scarani},\ and\ \citenamefont {Wehner}}]{Brunner14RMP}%
  \BibitemOpen
  \bibfield  {author} {\bibinfo {author} {\bibfnamefont {N.}~\bibnamefont
  {Brunner}}, \bibinfo {author} {\bibfnamefont {D.}~\bibnamefont {Cavalcanti}},
  \bibinfo {author} {\bibfnamefont {S.}~\bibnamefont {Pironio}}, \bibinfo
  {author} {\bibfnamefont {V.}~\bibnamefont {Scarani}}, \ and\ \bibinfo
  {author} {\bibfnamefont {S.}~\bibnamefont {Wehner}},\ }\bibfield  {title}
  {\enquote {\bibinfo {title} {{B}ell nonlocality},}\ }\href
  {http://link.aps.org/doi/10.1103/RevModPhys.86.419} {\bibfield  {journal}
  {\bibinfo  {journal} {Rev. Mod. Phys.}\ }\textbf {\bibinfo {volume} {86}},\
  \bibinfo {pages} {419} (\bibinfo {year} {2014})}\BibitemShut {NoStop}%
\bibitem [{\citenamefont {Branciard}\ \emph {et~al.}(2012)\citenamefont
  {Branciard}, \citenamefont {Cavalcanti}, \citenamefont {Walborn},
  \citenamefont {Scarani},\ and\ \citenamefont {Wiseman}}]{Branciard12}%
  \BibitemOpen
  \bibfield  {author} {\bibinfo {author} {\bibfnamefont {C.}~\bibnamefont
  {Branciard}}, \bibinfo {author} {\bibfnamefont {E.~G.}\ \bibnamefont
  {Cavalcanti}}, \bibinfo {author} {\bibfnamefont {S.~P.}\ \bibnamefont
  {Walborn}}, \bibinfo {author} {\bibfnamefont {V.}~\bibnamefont {Scarani}}, \
  and\ \bibinfo {author} {\bibfnamefont {H.~M.}\ \bibnamefont {Wiseman}},\
  }\bibfield  {title} {\enquote {\bibinfo {title} {One-sided device-independent
  quantum key distribution: Security, feasibility, and the connection with
  steering},}\ }\href {http://dx.doi.org/110.1103/PhysRevA.85.010301}
  {\bibfield  {journal} {\bibinfo  {journal} {Phys. Rev. A}\ }\textbf {\bibinfo
  {volume} {85}},\ \bibinfo {pages} {010301} (\bibinfo {year}
  {2012})}\BibitemShut {NoStop}%
\bibitem [{\citenamefont {Wiseman}\ \emph {et~al.}(2007)\citenamefont
  {Wiseman}, \citenamefont {Jones},\ and\ \citenamefont {Doherty}}]{Wiseman07}%
  \BibitemOpen
  \bibfield  {author} {\bibinfo {author} {\bibfnamefont {H.~M.}\ \bibnamefont
  {Wiseman}}, \bibinfo {author} {\bibfnamefont {S.~J.}\ \bibnamefont {Jones}},
  \ and\ \bibinfo {author} {\bibfnamefont {A.~C.}\ \bibnamefont {Doherty}},\
  }\bibfield  {title} {\enquote {\bibinfo {title} {Steering, entanglement,
  nonlocality, and the {E}instein-{P}odolsky-{R}osen paradox},}\ }\href
  {http://link.aps.org/doi/10.1103/PhysRevLett.98.140402} {\bibfield  {journal}
  {\bibinfo  {journal} {Phys. Rev. Lett.}\ }\textbf {\bibinfo {volume} {98}},\
  \bibinfo {pages} {140402} (\bibinfo {year} {2007})}\BibitemShut {NoStop}%
\bibitem [{\citenamefont {Jones}\ \emph {et~al.}(2007)\citenamefont {Jones},
  \citenamefont {Wiseman},\ and\ \citenamefont {Doherty}}]{Jones07}%
  \BibitemOpen
  \bibfield  {author} {\bibinfo {author} {\bibfnamefont {S.~J.}\ \bibnamefont
  {Jones}}, \bibinfo {author} {\bibfnamefont {H.~M.}\ \bibnamefont {Wiseman}},
  \ and\ \bibinfo {author} {\bibfnamefont {A.~C.}\ \bibnamefont {Doherty}},\
  }\bibfield  {title} {\enquote {\bibinfo {title} {Entanglement,
  {E}instein-{P}odolsky-{R}osen correlations, {B}ell nonlocality, and
  steering},}\ }\href {\doibase 10.1103/PhysRevA.76.052116} {\bibfield
  {journal} {\bibinfo  {journal} {Phys. Rev. A}\ }\textbf {\bibinfo {volume}
  {76}},\ \bibinfo {pages} {052116} (\bibinfo {year} {2007})}\BibitemShut
  {NoStop}%
\bibitem [{\citenamefont {Quintino}\ \emph {et~al.}(2015)\citenamefont
  {Quintino}, \citenamefont {V\'ertesi}, \citenamefont {Cavalcanti},
  \citenamefont {Augusiak}, \citenamefont {Demianowicz}, \citenamefont
  {Ac\'{\i}n},\ and\ \citenamefont {Brunner}}]{Quintino15}%
  \BibitemOpen
  \bibfield  {author} {\bibinfo {author} {\bibfnamefont {M.~T.}\ \bibnamefont
  {Quintino}}, \bibinfo {author} {\bibfnamefont {T.}~\bibnamefont {V\'ertesi}},
  \bibinfo {author} {\bibfnamefont {D.}~\bibnamefont {Cavalcanti}}, \bibinfo
  {author} {\bibfnamefont {R.}~\bibnamefont {Augusiak}}, \bibinfo {author}
  {\bibfnamefont {M.}~\bibnamefont {Demianowicz}}, \bibinfo {author}
  {\bibfnamefont {A.}~\bibnamefont {Ac\'{\i}n}}, \ and\ \bibinfo {author}
  {\bibfnamefont {N.}~\bibnamefont {Brunner}},\ }\bibfield  {title} {\enquote
  {\bibinfo {title} {Inequivalence of entanglement, steering, and {B}ell
  nonlocality for general measurements},}\ }\href
  {http://link.aps.org/doi/10.1103/PhysRevA.92.032107} {\bibfield  {journal}
  {\bibinfo  {journal} {Phys. Rev. A}\ }\textbf {\bibinfo {volume} {92}},\
  \bibinfo {pages} {032107} (\bibinfo {year} {2015})}\BibitemShut {NoStop}%
\bibitem [{\citenamefont {Saunders}\ \emph {et~al.}(2010)\citenamefont
  {Saunders}, \citenamefont {Jones}, \citenamefont {Wiseman},\ and\
  \citenamefont {Pryde}}]{Saunders10}%
  \BibitemOpen
  \bibfield  {author} {\bibinfo {author} {\bibfnamefont {D.~J.}\ \bibnamefont
  {Saunders}}, \bibinfo {author} {\bibfnamefont {S.~J.}\ \bibnamefont {Jones}},
  \bibinfo {author} {\bibfnamefont {H.~M.}\ \bibnamefont {Wiseman}}, \ and\
  \bibinfo {author} {\bibfnamefont {G.~J.}\ \bibnamefont {Pryde}},\ }\bibfield
  {title} {\enquote {\bibinfo {title} {Experimental {EPR}-steering using
  {B}ell-local states},}\ }\href {http://dx.doi.org/10.1038/nphys1766}
  {\bibfield  {journal} {\bibinfo  {journal} {Nat. Phys}\ }\textbf {\bibinfo
  {volume} {6}},\ \bibinfo {pages} {845--879} (\bibinfo {year}
  {2010})}\BibitemShut {NoStop}%
\bibitem [{\citenamefont {Bennet}\ \emph {et~al.}(2012)\citenamefont {Bennet},
  \citenamefont {Evans}, \citenamefont {Saunders}, \citenamefont {Branciard},
  \citenamefont {Cavalcanti}, \citenamefont {Wiseman},\ and\ \citenamefont
  {Pryde}}]{Bennet12}%
  \BibitemOpen
  \bibfield  {author} {\bibinfo {author} {\bibfnamefont {A.~J.}\ \bibnamefont
  {Bennet}}, \bibinfo {author} {\bibfnamefont {D.~A.}\ \bibnamefont {Evans}},
  \bibinfo {author} {\bibfnamefont {D.~J.}\ \bibnamefont {Saunders}}, \bibinfo
  {author} {\bibfnamefont {C.}~\bibnamefont {Branciard}}, \bibinfo {author}
  {\bibfnamefont {E.~G.}\ \bibnamefont {Cavalcanti}}, \bibinfo {author}
  {\bibfnamefont {H.~M.}\ \bibnamefont {Wiseman}}, \ and\ \bibinfo {author}
  {\bibfnamefont {G.~J.}\ \bibnamefont {Pryde}},\ }\bibfield  {title} {\enquote
  {\bibinfo {title} {Arbitrarily loss-tolerant {E}instein-{P}odolsky-{R}osen
  steering allowing a demonstration over 1~km of optical fiber with no
  detection loophole},}\ }\href {http://dx.doi.org/110.1103/PhysRevX.2.031003}
  {\bibfield  {journal} {\bibinfo  {journal} {Phys. Rev. X}\ }\textbf {\bibinfo
  {volume} {2}},\ \bibinfo {pages} {031003} (\bibinfo {year}
  {2012})}\BibitemShut {NoStop}%
\bibitem [{\citenamefont {H{\"a}ndchen}\ \emph {et~al.}(2012)\citenamefont
  {H{\"a}ndchen}, \citenamefont {Eberle}, \citenamefont {Steinlechner},
  \citenamefont {Samblowski}, \citenamefont {Franz}, \citenamefont {Werner},\
  and\ \citenamefont {Schnabel}}]{Handchen12}%
  \BibitemOpen
  \bibfield  {author} {\bibinfo {author} {\bibfnamefont {V.}~\bibnamefont
  {H{\"a}ndchen}}, \bibinfo {author} {\bibfnamefont {T.}~\bibnamefont
  {Eberle}}, \bibinfo {author} {\bibfnamefont {S.}~\bibnamefont
  {Steinlechner}}, \bibinfo {author} {\bibfnamefont {A.}~\bibnamefont
  {Samblowski}}, \bibinfo {author} {\bibfnamefont {T.}~\bibnamefont {Franz}},
  \bibinfo {author} {\bibfnamefont {R.~F.}\ \bibnamefont {Werner}}, \ and\
  \bibinfo {author} {\bibfnamefont {R.}~\bibnamefont {Schnabel}},\ }\bibfield
  {title} {\enquote {\bibinfo {title} {Observation of one-way
  {E}instein-{P}odolsky-{R}osen steering},}\ }\href
  {http://dx.doi.org/11010.1038/nphoton.2012.202} {\bibfield  {journal}
  {\bibinfo  {journal} {Nat. Photon.}\ }\textbf {\bibinfo {volume} {6}},\
  \bibinfo {pages} {596--599} (\bibinfo {year} {2012})}\BibitemShut {NoStop}%
\bibitem [{\citenamefont {Smith}\ \emph {et~al.}(2012)\citenamefont {Smith},
  \citenamefont {Gillett}, \citenamefont {de~Almeida}, \citenamefont
  {Branciard}, \citenamefont {Fedrizzi}, \citenamefont {Weinhold},
  \citenamefont {Lita}, \citenamefont {Calkins}, \citenamefont {Gerrits},
  \citenamefont {Wiseman}, \citenamefont {Nam},\ and\ \citenamefont
  {White}}]{Smith12}%
  \BibitemOpen
  \bibfield  {author} {\bibinfo {author} {\bibfnamefont {D.~H.}\ \bibnamefont
  {Smith}}, \bibinfo {author} {\bibfnamefont {G.}~\bibnamefont {Gillett}},
  \bibinfo {author} {\bibfnamefont {M.~P.}\ \bibnamefont {de~Almeida}},
  \bibinfo {author} {\bibfnamefont {C.}~\bibnamefont {Branciard}}, \bibinfo
  {author} {\bibfnamefont {A.}~\bibnamefont {Fedrizzi}}, \bibinfo {author}
  {\bibfnamefont {T.~J.}\ \bibnamefont {Weinhold}}, \bibinfo {author}
  {\bibfnamefont {A.}~\bibnamefont {Lita}}, \bibinfo {author} {\bibfnamefont
  {B.}~\bibnamefont {Calkins}}, \bibinfo {author} {\bibfnamefont
  {T.}~\bibnamefont {Gerrits}}, \bibinfo {author} {\bibfnamefont {H.~M.}\
  \bibnamefont {Wiseman}}, \bibinfo {author} {\bibfnamefont {S.~W.}\
  \bibnamefont {Nam}}, \ and\ \bibinfo {author} {\bibfnamefont {A.~G.}\
  \bibnamefont {White}},\ }\bibfield  {title} {\enquote {\bibinfo {title}
  {Conclusive quantum steering with superconducting transition-edge sensors},}\
  }\href {http://dx.doi.org/10.1038/ncomms1628} {\bibfield  {journal} {\bibinfo
   {journal} {Nat. Commun.}\ }\textbf {\bibinfo {volume} {3}},\ \bibinfo
  {pages} {845} (\bibinfo {year} {2012})}\BibitemShut {NoStop}%
\bibitem [{\citenamefont {Schneeloch}\ \emph {et~al.}(2013)\citenamefont
  {Schneeloch}, \citenamefont {Dixon}, \citenamefont {Howland}, \citenamefont
  {Broadbent},\ and\ \citenamefont {Howell}}]{Schneeloch13}%
  \BibitemOpen
  \bibfield  {author} {\bibinfo {author} {\bibfnamefont {J.}~\bibnamefont
  {Schneeloch}}, \bibinfo {author} {\bibfnamefont {P.~B.}\ \bibnamefont
  {Dixon}}, \bibinfo {author} {\bibfnamefont {G.~A.}\ \bibnamefont {Howland}},
  \bibinfo {author} {\bibfnamefont {C.~J.}\ \bibnamefont {Broadbent}}, \ and\
  \bibinfo {author} {\bibfnamefont {J.~C.}\ \bibnamefont {Howell}},\ }\bibfield
   {title} {\enquote {\bibinfo {title} {Violation of continuous-variable
  {E}instein-{P}odolsky-{R}osen steering with discrete measurements},}\ }\href
  {http://dx.doi.org/110.1103/PhysRevLett.110.130407} {\bibfield  {journal}
  {\bibinfo  {journal} {Phys. Rev. Lett.}\ }\textbf {\bibinfo {volume} {110}},\
  \bibinfo {pages} {130407} (\bibinfo {year} {2013})}\BibitemShut {NoStop}%
\bibitem [{\citenamefont {Su}\ \emph {et~al.}(2013)\citenamefont {Su},
  \citenamefont {Chen}, \citenamefont {Wu}, \citenamefont {Deng},\ and\
  \citenamefont {Oh}}]{Su13}%
  \BibitemOpen
  \bibfield  {author} {\bibinfo {author} {\bibfnamefont {H.-Y.}\ \bibnamefont
  {Su}}, \bibinfo {author} {\bibfnamefont {J.-L.}\ \bibnamefont {Chen}},
  \bibinfo {author} {\bibfnamefont {C.}~\bibnamefont {Wu}}, \bibinfo {author}
  {\bibfnamefont {D.-L.}\ \bibnamefont {Deng}}, \ and\ \bibinfo {author}
  {\bibfnamefont {C.~H.}\ \bibnamefont {Oh}},\ }\bibfield  {title} {\enquote
  {\bibinfo {title} {Detecting {E}instein-{P}odolsky-{R}osen steering for
  continuous variable wavefunctions},}\ }\href
  {http://dx.doi.org/110.1142/S0219749913500196} {\bibfield  {journal}
  {\bibinfo  {journal} {I. J. Quant. Infor.}\ }\textbf {\bibinfo {volume}
  {11}},\ \bibinfo {pages} {1350019} (\bibinfo {year} {2013})}\BibitemShut
  {NoStop}%
\bibitem [{\citenamefont {Sun}\ \emph {et~al.}(2016)\citenamefont {Sun},
  \citenamefont {Ye}, \citenamefont {Xu}, \citenamefont {Xu}, \citenamefont
  {Tang}, \citenamefont {Wu}, \citenamefont {Chen}, \citenamefont {Li},\ and\
  \citenamefont {Guo}}]{Sun16}%
  \BibitemOpen
  \bibfield  {author} {\bibinfo {author} {\bibfnamefont {K.}~\bibnamefont
  {Sun}}, \bibinfo {author} {\bibfnamefont {X.-J.}\ \bibnamefont {Ye}},
  \bibinfo {author} {\bibfnamefont {J.-S.}\ \bibnamefont {Xu}}, \bibinfo
  {author} {\bibfnamefont {X.-Y.}\ \bibnamefont {Xu}}, \bibinfo {author}
  {\bibfnamefont {J.-S.}\ \bibnamefont {Tang}}, \bibinfo {author}
  {\bibfnamefont {Y.-C.}\ \bibnamefont {Wu}}, \bibinfo {author} {\bibfnamefont
  {J.-L.}\ \bibnamefont {Chen}}, \bibinfo {author} {\bibfnamefont {C.-F.}\
  \bibnamefont {Li}}, \ and\ \bibinfo {author} {\bibfnamefont {G.-C.}\
  \bibnamefont {Guo}},\ }\bibfield  {title} {\enquote {\bibinfo {title}
  {Experimental quantification of asymmetric {E}instein-{P}odolsky-{R}osen
  steering},}\ }\href {http://link.aps.org/doi/10.1103/PhysRevLett.116.160404}
  {\bibfield  {journal} {\bibinfo  {journal} {Phys. Rev. Lett.}\ }\textbf
  {\bibinfo {volume} {116}},\ \bibinfo {pages} {160404} (\bibinfo {year}
  {2016})}\BibitemShut {NoStop}%
\bibitem [{\citenamefont {Cavalcanti}\ \emph {et~al.}(2009)\citenamefont
  {Cavalcanti}, \citenamefont {Jones}, \citenamefont {Wiseman},\ and\
  \citenamefont {Reid}}]{Cavalcanti09}%
  \BibitemOpen
  \bibfield  {author} {\bibinfo {author} {\bibfnamefont {E.~G.}\ \bibnamefont
  {Cavalcanti}}, \bibinfo {author} {\bibfnamefont {S.~J.}\ \bibnamefont
  {Jones}}, \bibinfo {author} {\bibfnamefont {H.~M.}\ \bibnamefont {Wiseman}},
  \ and\ \bibinfo {author} {\bibfnamefont {M.~D.}\ \bibnamefont {Reid}},\
  }\bibfield  {title} {\enquote {\bibinfo {title} {Experimental criteria for
  steering and the {E}instein-{P}odolsky-{R}osen paradox},}\ }\href
  {http://link.aps.org/doi/10.1103/PhysRevA.80.032112} {\bibfield  {journal}
  {\bibinfo  {journal} {Phys. Rev. A}\ }\textbf {\bibinfo {volume} {80}},\
  \bibinfo {pages} {032112} (\bibinfo {year} {2009})}\BibitemShut {NoStop}%
\bibitem [{\citenamefont {Reid}(1989)}]{Reid89}%
  \BibitemOpen
  \bibfield  {author} {\bibinfo {author} {\bibfnamefont {M.~D.}\ \bibnamefont
  {Reid}},\ }\bibfield  {title} {\enquote {\bibinfo {title} {Demonstration of
  the {E}instein-{P}odolsky-{R}osen paradox using nondegenerate parametric
  amplification},}\ }\href {http://link.aps.org/doi/10.1103/PhysRevA.40.913}
  {\bibfield  {journal} {\bibinfo  {journal} {Phys. Rev. A}\ }\textbf {\bibinfo
  {volume} {40}},\ \bibinfo {pages} {913--923} (\bibinfo {year}
  {1989})}\BibitemShut {NoStop}%
\bibitem [{\citenamefont {Pusey}(2013)}]{Pusey13}%
  \BibitemOpen
  \bibfield  {author} {\bibinfo {author} {\bibfnamefont {M.~F.}\ \bibnamefont
  {Pusey}},\ }\bibfield  {title} {\enquote {\bibinfo {title} {Negativity and
  steering: A stronger {P}eres conjecture},}\ }\href
  {http://link.aps.org/doi/10.1103/PhysRevA.88.032313} {\bibfield  {journal}
  {\bibinfo  {journal} {Phys. Rev. A}\ }\textbf {\bibinfo {volume} {88}},\
  \bibinfo {pages} {032313} (\bibinfo {year} {2013})}\BibitemShut {NoStop}%
\bibitem [{\citenamefont {Walborn}\ \emph {et~al.}(2011)\citenamefont
  {Walborn}, \citenamefont {Salles}, \citenamefont {Gomes}, \citenamefont
  {Toscano},\ and\ \citenamefont {Souto~Ribeiro}}]{Walborn11}%
  \BibitemOpen
  \bibfield  {author} {\bibinfo {author} {\bibfnamefont {S.~P.}\ \bibnamefont
  {Walborn}}, \bibinfo {author} {\bibfnamefont {A.}~\bibnamefont {Salles}},
  \bibinfo {author} {\bibfnamefont {R.~M.}\ \bibnamefont {Gomes}}, \bibinfo
  {author} {\bibfnamefont {F.}~\bibnamefont {Toscano}}, \ and\ \bibinfo
  {author} {\bibfnamefont {P.~H.}\ \bibnamefont {Souto~Ribeiro}},\ }\bibfield
  {title} {\enquote {\bibinfo {title} {Revealing hidden
  {E}instein-{P}odolsky-{R}osen nonlocality},}\ }\href
  {http://dx.doi.org/110.1103/PhysRevLett.106.130402} {\bibfield  {journal}
  {\bibinfo  {journal} {Phys. Rev. Lett.}\ }\textbf {\bibinfo {volume} {106}},\
  \bibinfo {pages} {130402} (\bibinfo {year} {2011})}\BibitemShut {NoStop}%
\bibitem [{\citenamefont {Kogias}\ \emph {et~al.}(2015)\citenamefont {Kogias},
  \citenamefont {Lee}, \citenamefont {Ragy},\ and\ \citenamefont
  {Adesso}}]{Kogias15}%
  \BibitemOpen
  \bibfield  {author} {\bibinfo {author} {\bibfnamefont {I.}~\bibnamefont
  {Kogias}}, \bibinfo {author} {\bibfnamefont {A.~R.}\ \bibnamefont {Lee}},
  \bibinfo {author} {\bibfnamefont {S.}~\bibnamefont {Ragy}}, \ and\ \bibinfo
  {author} {\bibfnamefont {G.}~\bibnamefont {Adesso}},\ }\bibfield  {title}
  {\enquote {\bibinfo {title} {Quantification of {G}aussian quantum
  steering},}\ }\href {http://link.aps.org/doi/10.1103/PhysRevLett.114.060403}
  {\bibfield  {journal} {\bibinfo  {journal} {Phys. Rev. Lett.}\ }\textbf
  {\bibinfo {volume} {114}},\ \bibinfo {pages} {060403} (\bibinfo {year}
  {2015})}\BibitemShut {NoStop}%
\bibitem [{\citenamefont {Costa}\ and\ \citenamefont {Angelo}(2016)}]{Costa16}%
  \BibitemOpen
  \bibfield  {author} {\bibinfo {author} {\bibfnamefont {A.~C.~S.}\
  \bibnamefont {Costa}}\ and\ \bibinfo {author} {\bibfnamefont {R.~M.}\
  \bibnamefont {Angelo}},\ }\bibfield  {title} {\enquote {\bibinfo {title}
  {Quantification of {E}instein-{P}odolski-{R}osen steering for two-qubit
  states},}\ }\href {http://link.aps.org/doi/10.1103/PhysRevA.93.020103}
  {\bibfield  {journal} {\bibinfo  {journal} {Phys. Rev. A}\ }\textbf {\bibinfo
  {volume} {93}},\ \bibinfo {pages} {020103} (\bibinfo {year}
  {2016})}\BibitemShut {NoStop}%
\bibitem [{\citenamefont {Cavalcanti}\ and\ \citenamefont
  {Skrzypczyk}(2016)}]{Cavalcanti16}%
  \BibitemOpen
  \bibfield  {author} {\bibinfo {author} {\bibfnamefont {D.}~\bibnamefont
  {Cavalcanti}}\ and\ \bibinfo {author} {\bibfnamefont {P.}~\bibnamefont
  {Skrzypczyk}},\ }\bibfield  {title} {\enquote {\bibinfo {title} {Quantitative
  relations between measurement incompatibility, quantum steering, and
  nonlocality},}\ }\href {http://link.aps.org/doi/10.1103/PhysRevA.93.052112}
  {\bibfield  {journal} {\bibinfo  {journal} {Phys. Rev. A}\ }\textbf {\bibinfo
  {volume} {93}},\ \bibinfo {pages} {052112} (\bibinfo {year}
  {2016})}\BibitemShut {NoStop}%
\bibitem [{\citenamefont {Uola}\ \emph {et~al.}(2014)\citenamefont {Uola},
  \citenamefont {Moroder},\ and\ \citenamefont {G\"uhne}}]{Uola14}%
  \BibitemOpen
  \bibfield  {author} {\bibinfo {author} {\bibfnamefont {R.}~\bibnamefont
  {Uola}}, \bibinfo {author} {\bibfnamefont {T.}~\bibnamefont {Moroder}}, \
  and\ \bibinfo {author} {\bibfnamefont {O.}~\bibnamefont {G\"uhne}},\
  }\bibfield  {title} {\enquote {\bibinfo {title} {Joint measurability of
  generalized measurements implies classicality},}\ }\href
  {http://link.aps.org/doi/10.1103/PhysRevLett.113.160403} {\bibfield
  {journal} {\bibinfo  {journal} {Phys. Rev. Lett.}\ }\textbf {\bibinfo
  {volume} {113}},\ \bibinfo {pages} {160403} (\bibinfo {year}
  {2014})}\BibitemShut {NoStop}%
\bibitem [{\citenamefont {Quintino}\ \emph {et~al.}(2014)\citenamefont
  {Quintino}, \citenamefont {V\'ertesi},\ and\ \citenamefont
  {Brunner}}]{Quintino14}%
  \BibitemOpen
  \bibfield  {author} {\bibinfo {author} {\bibfnamefont {M.~T.}\ \bibnamefont
  {Quintino}}, \bibinfo {author} {\bibfnamefont {T.}~\bibnamefont {V\'ertesi}},
  \ and\ \bibinfo {author} {\bibfnamefont {N.}~\bibnamefont {Brunner}},\
  }\bibfield  {title} {\enquote {\bibinfo {title} {Joint measurability,
  {E}instein-{P}odolsky-{R}osen steering, and {B}ell nonlocality},}\ }\href
  {http://link.aps.org/doi/10.1103/PhysRevLett.113.160402} {\bibfield
  {journal} {\bibinfo  {journal} {Phys. Rev. Lett.}\ }\textbf {\bibinfo
  {volume} {113}},\ \bibinfo {pages} {160402} (\bibinfo {year}
  {2014})}\BibitemShut {NoStop}%
\bibitem [{\citenamefont {Chen}\ \emph
  {et~al.}(2016{\natexlab{a}})\citenamefont {Chen}, \citenamefont {Budroni},
  \citenamefont {Liang},\ and\ \citenamefont {Chen}}]{Shin-Liang16c}%
  \BibitemOpen
  \bibfield  {author} {\bibinfo {author} {\bibfnamefont {S.-L.}\ \bibnamefont
  {Chen}}, \bibinfo {author} {\bibfnamefont {C.}~\bibnamefont {Budroni}},
  \bibinfo {author} {\bibfnamefont {Y.-C.}\ \bibnamefont {Liang}}, \ and\
  \bibinfo {author} {\bibfnamefont {Y.-N.}\ \bibnamefont {Chen}},\ }\bibfield
  {title} {\enquote {\bibinfo {title} {Natural framework for device-independent
  quantification of quantum steerability, measurement incompatibility, and
  self-testing},}\ }\href
  {http://link.aps.org/doi/10.1103/PhysRevLett.116.240401} {\bibfield
  {journal} {\bibinfo  {journal} {Phys. Rev. Lett.}\ }\textbf {\bibinfo
  {volume} {116}},\ \bibinfo {pages} {240401} (\bibinfo {year}
  {2016}{\natexlab{a}})}\BibitemShut {NoStop}%
\bibitem [{\citenamefont {Uola}\ \emph {et~al.}(2015)\citenamefont {Uola},
  \citenamefont {Budroni}, \citenamefont {G\"uhne},\ and\ \citenamefont
  {Pellonp\"a\"a}}]{Uola15}%
  \BibitemOpen
  \bibfield  {author} {\bibinfo {author} {\bibfnamefont {R.}~\bibnamefont
  {Uola}}, \bibinfo {author} {\bibfnamefont {C.}~\bibnamefont {Budroni}},
  \bibinfo {author} {\bibfnamefont {O.}~\bibnamefont {G\"uhne}}, \ and\
  \bibinfo {author} {\bibfnamefont {J.}~\bibnamefont {Pellonp\"a\"a}},\
  }\bibfield  {title} {\enquote {\bibinfo {title} {One-to-one mapping between
  steering and joint measurability problems},}\ }\href
  {http://link.aps.org/doi/10.1103/PhysRevLett.115.230402} {\bibfield
  {journal} {\bibinfo  {journal} {Phys. Rev. Lett.}\ }\textbf {\bibinfo
  {volume} {115}},\ \bibinfo {pages} {230402} (\bibinfo {year}
  {2015})}\BibitemShut {NoStop}%
\bibitem [{\citenamefont {Wollmann}\ \emph {et~al.}(2016)\citenamefont
  {Wollmann}, \citenamefont {Walk}, \citenamefont {Bennet}, \citenamefont
  {Wiseman},\ and\ \citenamefont {Pryde}}]{Wollmann16}%
  \BibitemOpen
  \bibfield  {author} {\bibinfo {author} {\bibfnamefont {S.}~\bibnamefont
  {Wollmann}}, \bibinfo {author} {\bibfnamefont {N.}~\bibnamefont {Walk}},
  \bibinfo {author} {\bibfnamefont {A.~J.}\ \bibnamefont {Bennet}}, \bibinfo
  {author} {\bibfnamefont {H.~M.}\ \bibnamefont {Wiseman}}, \ and\ \bibinfo
  {author} {\bibfnamefont {G.~J.}\ \bibnamefont {Pryde}},\ }\bibfield  {title}
  {\enquote {\bibinfo {title} {Observation of genuine one-way
  {E}instein-{P}odolsky-{R}osen steering},}\ }\href
  {http://link.aps.org/doi/10.1103/PhysRevLett.116.160403} {\bibfield
  {journal} {\bibinfo  {journal} {Phys. Rev. Lett.}\ }\textbf {\bibinfo
  {volume} {116}},\ \bibinfo {pages} {160403} (\bibinfo {year}
  {2016})}\BibitemShut {NoStop}%
\bibitem [{\citenamefont {Bowles}\ \emph {et~al.}(2014)\citenamefont {Bowles},
  \citenamefont {V\'ertesi}, \citenamefont {Quintino},\ and\ \citenamefont
  {Brunner}}]{Bowles14}%
  \BibitemOpen
  \bibfield  {author} {\bibinfo {author} {\bibfnamefont {J.}~\bibnamefont
  {Bowles}}, \bibinfo {author} {\bibfnamefont {T.}~\bibnamefont {V\'ertesi}},
  \bibinfo {author} {\bibfnamefont {M.~T.}\ \bibnamefont {Quintino}}, \ and\
  \bibinfo {author} {\bibfnamefont {N.}~\bibnamefont {Brunner}},\ }\bibfield
  {title} {\enquote {\bibinfo {title} {One-way {E}instein-{P}odolsky-{R}osen
  steering},}\ }\href {\doibase 10.1103/PhysRevLett.112.200402} {\bibfield
  {journal} {\bibinfo  {journal} {Phys. Rev. Lett.}\ }\textbf {\bibinfo
  {volume} {112}},\ \bibinfo {pages} {200402} (\bibinfo {year}
  {2014})}\BibitemShut {NoStop}%
\bibitem [{\citenamefont {Chen}\ \emph {et~al.}(2014)\citenamefont {Chen},
  \citenamefont {Li}, \citenamefont {Lambert}, \citenamefont {Chen},
  \citenamefont {Ota}, \citenamefont {Chen},\ and\ \citenamefont
  {Nori}}]{Yueh-Nan14}%
  \BibitemOpen
  \bibfield  {author} {\bibinfo {author} {\bibfnamefont {Y.-N.}\ \bibnamefont
  {Chen}}, \bibinfo {author} {\bibfnamefont {C.-M.}\ \bibnamefont {Li}},
  \bibinfo {author} {\bibfnamefont {N.}~\bibnamefont {Lambert}}, \bibinfo
  {author} {\bibfnamefont {S.-L.}\ \bibnamefont {Chen}}, \bibinfo {author}
  {\bibfnamefont {Y.}~\bibnamefont {Ota}}, \bibinfo {author} {\bibfnamefont
  {G.-Y.}\ \bibnamefont {Chen}}, \ and\ \bibinfo {author} {\bibfnamefont
  {F.}~\bibnamefont {Nori}},\ }\bibfield  {title} {\enquote {\bibinfo {title}
  {Temporal steering inequality},}\ }\href
  {http://link.aps.org/doi/10.1103/PhysRevA.89.032112} {\bibfield  {journal}
  {\bibinfo  {journal} {Phys. Rev. A}\ }\textbf {\bibinfo {volume} {89}},\
  \bibinfo {pages} {032112} (\bibinfo {year} {2014})}\BibitemShut {NoStop}%
\bibitem [{\citenamefont {Chen}\ \emph
  {et~al.}(2016{\natexlab{b}})\citenamefont {Chen}, \citenamefont {Lambert},
  \citenamefont {Li}, \citenamefont {Miranowicz}, \citenamefont {Chen},\ and\
  \citenamefont {Nori}}]{Shin-Liang16}%
  \BibitemOpen
  \bibfield  {author} {\bibinfo {author} {\bibfnamefont {S.-L.}\ \bibnamefont
  {Chen}}, \bibinfo {author} {\bibfnamefont {N.}~\bibnamefont {Lambert}},
  \bibinfo {author} {\bibfnamefont {C.-M.}\ \bibnamefont {Li}}, \bibinfo
  {author} {\bibfnamefont {A.}~\bibnamefont {Miranowicz}}, \bibinfo {author}
  {\bibfnamefont {Y.-N.}\ \bibnamefont {Chen}}, \ and\ \bibinfo {author}
  {\bibfnamefont {F.}~\bibnamefont {Nori}},\ }\bibfield  {title} {\enquote
  {\bibinfo {title} {Quantifying non-{M}arkovianity with temporal steering},}\
  }\href {http://link.aps.org/doi/10.1103/PhysRevLett.116.020503} {\bibfield
  {journal} {\bibinfo  {journal} {Phys. Rev. Lett.}\ }\textbf {\bibinfo
  {volume} {116}},\ \bibinfo {pages} {020503} (\bibinfo {year}
  {2016}{\natexlab{b}})}\BibitemShut {NoStop}%
\bibitem [{\citenamefont {Ku}\ \emph {et~al.}(2016)\citenamefont {Ku},
  \citenamefont {Chen}, \citenamefont {Chen}, \citenamefont {Lambert},
  \citenamefont {Chen},\ and\ \citenamefont {Nori}}]{Ku16}%
  \BibitemOpen
  \bibfield  {author} {\bibinfo {author} {\bibfnamefont {H.-Y.}\ \bibnamefont
  {Ku}}, \bibinfo {author} {\bibfnamefont {S.-L.}\ \bibnamefont {Chen}},
  \bibinfo {author} {\bibfnamefont {H.-B.}\ \bibnamefont {Chen}}, \bibinfo
  {author} {\bibfnamefont {N.}~\bibnamefont {Lambert}}, \bibinfo {author}
  {\bibfnamefont {Y.-N.}\ \bibnamefont {Chen}}, \ and\ \bibinfo {author}
  {\bibfnamefont {F.}~\bibnamefont {Nori}},\ }\bibfield  {title} {\enquote
  {\bibinfo {title} {Temporal steering in four dimensions with applications to
  coupled qubits and magnetoreception},}\ }\href
  {http://link.aps.org/doi/10.1103/PhysRevA.94.062126} {\bibfield  {journal}
  {\bibinfo  {journal} {Phys. Rev. A}\ }\textbf {\bibinfo {volume} {94}},\
  \bibinfo {pages} {062126} (\bibinfo {year} {2016})}\BibitemShut {NoStop}%
\bibitem [{\citenamefont {Li}\ \emph {et~al.}(2015)\citenamefont {Li},
  \citenamefont {Chen}, \citenamefont {Lambert}, \citenamefont {Chiu},\ and\
  \citenamefont {Nori}}]{Che-Ming15}%
  \BibitemOpen
  \bibfield  {author} {\bibinfo {author} {\bibfnamefont {C.-M.}\ \bibnamefont
  {Li}}, \bibinfo {author} {\bibfnamefont {Y.-N.}\ \bibnamefont {Chen}},
  \bibinfo {author} {\bibfnamefont {N.}~\bibnamefont {Lambert}}, \bibinfo
  {author} {\bibfnamefont {C.-Y.}\ \bibnamefont {Chiu}}, \ and\ \bibinfo
  {author} {\bibfnamefont {F.}~\bibnamefont {Nori}},\ }\bibfield  {title}
  {\enquote {\bibinfo {title} {Certifying single-system steering for
  quantum-information processing},}\ }\href
  {http://link.aps.org/doi/10.1103/PhysRevA.92.062310} {\bibfield  {journal}
  {\bibinfo  {journal} {Phys. Rev. A}\ }\textbf {\bibinfo {volume} {92}},\
  \bibinfo {pages} {062310} (\bibinfo {year} {2015})}\BibitemShut {NoStop}%
\bibitem [{\citenamefont {Ku}\ \emph {et~al.}(2018{\natexlab{a}})\citenamefont
  {Ku}, \citenamefont {Chen}, \citenamefont {Lambert}, \citenamefont {Chen},\
  and\ \citenamefont {Nori}}]{Ku18b}%
  \BibitemOpen
  \bibfield  {author} {\bibinfo {author} {\bibfnamefont {H.-Y.}\ \bibnamefont
  {Ku}}, \bibinfo {author} {\bibfnamefont {S.-L.}\ \bibnamefont {Chen}},
  \bibinfo {author} {\bibfnamefont {N.}~\bibnamefont {Lambert}}, \bibinfo
  {author} {\bibfnamefont {Y.-N.}\ \bibnamefont {Chen}}, \ and\ \bibinfo
  {author} {\bibfnamefont {F.}~\bibnamefont {Nori}},\ }\bibfield  {title}
  {\enquote {\bibinfo {title} {Hierarchy in temporal quantum correlations},}\
  }\href {\doibase 10.1103/PhysRevA.98.022104} {\bibfield  {journal} {\bibinfo
  {journal} {Phys. Rev. A}\ }\textbf {\bibinfo {volume} {98}},\ \bibinfo
  {pages} {022104} (\bibinfo {year} {2018}{\natexlab{a}})}\BibitemShut
  {NoStop}%
\bibitem [{\citenamefont {Tatham}\ \emph {et~al.}(2012)\citenamefont {Tatham},
  \citenamefont {Mi\ifmmode~\check{s}\else \v{s}\fi{}ta}, \citenamefont
  {Adesso},\ and\ \citenamefont {Korolkova}}]{Tatham12}%
  \BibitemOpen
  \bibfield  {author} {\bibinfo {author} {\bibfnamefont {R.}~\bibnamefont
  {Tatham}}, \bibinfo {author} {\bibfnamefont {L.}~\bibnamefont
  {Mi\ifmmode~\check{s}\else \v{s}\fi{}ta}}, \bibinfo {author} {\bibfnamefont
  {G.}~\bibnamefont {Adesso}}, \ and\ \bibinfo {author} {\bibfnamefont
  {N.}~\bibnamefont {Korolkova}},\ }\bibfield  {title} {\enquote {\bibinfo
  {title} {Nonclassical correlations in continuous-variable non-{G}aussian
  {W}erner states},}\ }\href
  {http://link.aps.org/doi/10.1103/PhysRevA.85.022326} {\bibfield  {journal}
  {\bibinfo  {journal} {Phys. Rev. A}\ }\textbf {\bibinfo {volume} {85}},\
  \bibinfo {pages} {022326} (\bibinfo {year} {2012})}\BibitemShut {NoStop}%
\bibitem [{\citenamefont {He}\ \emph {et~al.}(2015)\citenamefont {He},
  \citenamefont {Rosales-Z\'arate}, \citenamefont {Adesso},\ and\ \citenamefont
  {Reid}}]{Qiongyi15}%
  \BibitemOpen
  \bibfield  {author} {\bibinfo {author} {\bibfnamefont {Q.}~\bibnamefont
  {He}}, \bibinfo {author} {\bibfnamefont {L.}~\bibnamefont
  {Rosales-Z\'arate}}, \bibinfo {author} {\bibfnamefont {G.}~\bibnamefont
  {Adesso}}, \ and\ \bibinfo {author} {\bibfnamefont {M.~D.}\ \bibnamefont
  {Reid}},\ }\bibfield  {title} {\enquote {\bibinfo {title} {Secure continuous
  variable teleportation and {E}instein-{P}odolsky-{R}osen steering},}\ }\href
  {http://link.aps.org/doi/10.1103/PhysRevLett.115.180502} {\bibfield
  {journal} {\bibinfo  {journal} {Phys. Rev. Lett.}\ }\textbf {\bibinfo
  {volume} {115}},\ \bibinfo {pages} {180502} (\bibinfo {year}
  {2015})}\BibitemShut {NoStop}%
\bibitem [{\citenamefont {Xiang}\ \emph {et~al.}(2017)\citenamefont {Xiang},
  \citenamefont {Kogias}, \citenamefont {Adesso},\ and\ \citenamefont
  {He}}]{Xiang17}%
  \BibitemOpen
  \bibfield  {author} {\bibinfo {author} {\bibfnamefont {Y.}~\bibnamefont
  {Xiang}}, \bibinfo {author} {\bibfnamefont {I.}~\bibnamefont {Kogias}},
  \bibinfo {author} {\bibfnamefont {G.}~\bibnamefont {Adesso}}, \ and\ \bibinfo
  {author} {\bibfnamefont {Q.}~\bibnamefont {He}},\ }\bibfield  {title}
  {\enquote {\bibinfo {title} {Multipartite {G}aussian steering: Monogamy
  constraints and quantum cryptography applications},}\ }\href
  {http://link.aps.org/doi/10.1103/PhysRevA.95.010101} {\bibfield  {journal}
  {\bibinfo  {journal} {Phys. Rev. A}\ }\textbf {\bibinfo {volume} {95}},\
  \bibinfo {pages} {010101} (\bibinfo {year} {2017})}\BibitemShut {NoStop}%
\bibitem [{\citenamefont {Skrzypczyk}\ \emph {et~al.}(2014)\citenamefont
  {Skrzypczyk}, \citenamefont {Navascu\'es},\ and\ \citenamefont
  {Cavalcanti}}]{Skrzypczyk14}%
  \BibitemOpen
  \bibfield  {author} {\bibinfo {author} {\bibfnamefont {P.}~\bibnamefont
  {Skrzypczyk}}, \bibinfo {author} {\bibfnamefont {M.}~\bibnamefont
  {Navascu\'es}}, \ and\ \bibinfo {author} {\bibfnamefont {D.}~\bibnamefont
  {Cavalcanti}},\ }\bibfield  {title} {\enquote {\bibinfo {title} {Quantifying
  {E}instein-{P}odolsky-{R}osen steering},}\ }\href
  {http://link.aps.org/doi/10.1103/PhysRevLett.112.180404} {\bibfield
  {journal} {\bibinfo  {journal} {Phys. Rev. Lett.}\ }\textbf {\bibinfo
  {volume} {112}},\ \bibinfo {pages} {180404} (\bibinfo {year}
  {2014})}\BibitemShut {NoStop}%
\bibitem [{\citenamefont {Piani}\ and\ \citenamefont
  {Watrous}(2015)}]{Piani15}%
  \BibitemOpen
  \bibfield  {author} {\bibinfo {author} {\bibfnamefont {M.}~\bibnamefont
  {Piani}}\ and\ \bibinfo {author} {\bibfnamefont {J.}~\bibnamefont
  {Watrous}},\ }\bibfield  {title} {\enquote {\bibinfo {title} {Necessary and
  sufficient quantum information characterization of
  {E}instein-{P}odolsky-{R}osen steering},}\ }\href
  {http://link.aps.org/doi/10.1103/PhysRevLett.114.060404} {\bibfield
  {journal} {\bibinfo  {journal} {Phys. Rev. Lett.}\ }\textbf {\bibinfo
  {volume} {114}},\ \bibinfo {pages} {060404} (\bibinfo {year}
  {2015})}\BibitemShut {NoStop}%
\bibitem [{\citenamefont {Hsieh}\ \emph {et~al.}(2016)\citenamefont {Hsieh},
  \citenamefont {Liang},\ and\ \citenamefont {Lee}}]{Hsieh16}%
  \BibitemOpen
  \bibfield  {author} {\bibinfo {author} {\bibfnamefont {C.-Y.}\ \bibnamefont
  {Hsieh}}, \bibinfo {author} {\bibfnamefont {Y.-C.}\ \bibnamefont {Liang}}, \
  and\ \bibinfo {author} {\bibfnamefont {R.-K.}\ \bibnamefont {Lee}},\
  }\bibfield  {title} {\enquote {\bibinfo {title} {Quantum steerability:
  Characterization, quantification, superactivation, and unbounded
  amplification},}\ }\href {http://link.aps.org/doi/10.1103/PhysRevA.94.062120}
  {\bibfield  {journal} {\bibinfo  {journal} {Phys. Rev. A}\ }\textbf {\bibinfo
  {volume} {94}},\ \bibinfo {pages} {062120} (\bibinfo {year}
  {2016})}\BibitemShut {NoStop}%
\bibitem [{\citenamefont {Gallego}\ and\ \citenamefont
  {Aolita}(2015)}]{Gallego15}%
  \BibitemOpen
  \bibfield  {author} {\bibinfo {author} {\bibfnamefont {R.}~\bibnamefont
  {Gallego}}\ and\ \bibinfo {author} {\bibfnamefont {L.}~\bibnamefont
  {Aolita}},\ }\bibfield  {title} {\enquote {\bibinfo {title} {Resource theory
  of steering},}\ }\href {http://link.aps.org/doi/10.1103/PhysRevX.5.041008}
  {\bibfield  {journal} {\bibinfo  {journal} {Phys. Rev. X}\ }\textbf {\bibinfo
  {volume} {5}},\ \bibinfo {pages} {041008} (\bibinfo {year}
  {2015})}\BibitemShut {NoStop}%
\bibitem [{\citenamefont {Kaur}\ and\ \citenamefont {Wilde}(2017)}]{Eneet17b}%
  \BibitemOpen
  \bibfield  {author} {\bibinfo {author} {\bibfnamefont {E.}~\bibnamefont
  {Kaur}}\ and\ \bibinfo {author} {\bibfnamefont {M.~M.}\ \bibnamefont
  {Wilde}},\ }\bibfield  {title} {\enquote {\bibinfo {title} {Relative entropy
  of steering: on its definition and properties},}\ }\href
  {http://stacks.iop.org/1751-8121/50/i=46/a=465301} {\bibfield  {journal}
  {\bibinfo  {journal} {J. Phys. A}\ }\textbf {\bibinfo {volume} {50}},\
  \bibinfo {pages} {465301} (\bibinfo {year} {2017})}\BibitemShut {NoStop}%
\bibitem [{\citenamefont {Kaur}\ \emph {et~al.}(2017)\citenamefont {Kaur},
  \citenamefont {Wang},\ and\ \citenamefont {Wilde}}]{Eneet17a}%
  \BibitemOpen
  \bibfield  {author} {\bibinfo {author} {\bibfnamefont {E.}~\bibnamefont
  {Kaur}}, \bibinfo {author} {\bibfnamefont {X.}~\bibnamefont {Wang}}, \ and\
  \bibinfo {author} {\bibfnamefont {M.~M.}\ \bibnamefont {Wilde}},\ }\bibfield
  {title} {\enquote {\bibinfo {title} {Conditional mutual information and
  quantum steering},}\ }\href {\doibase 10.1103/PhysRevA.96.022332} {\bibfield
  {journal} {\bibinfo  {journal} {Phys. Rev. A}\ }\textbf {\bibinfo {volume}
  {96}},\ \bibinfo {pages} {022332} (\bibinfo {year} {2017})}\BibitemShut
  {NoStop}%
\bibitem [{\citenamefont {Cavalcanti}\ and\ \citenamefont
  {Skrzypczyk}(2017)}]{SDPreview17}%
  \BibitemOpen
  \bibfield  {author} {\bibinfo {author} {\bibfnamefont {D.}~\bibnamefont
  {Cavalcanti}}\ and\ \bibinfo {author} {\bibfnamefont {P.}~\bibnamefont
  {Skrzypczyk}},\ }\bibfield  {title} {\enquote {\bibinfo {title} {Quantum
  steering: a review with focus on semidefinite programming},}\ }\href
  {http://stacks.iop.org/0034-4885/80/i=2/a=024001} {\bibfield  {journal}
  {\bibinfo  {journal} {Rep. Prog. Phys.}\ }\textbf {\bibinfo {volume} {80}},\
  \bibinfo {pages} {024001} (\bibinfo {year} {2017})}\BibitemShut {NoStop}%
\bibitem [{\citenamefont {Ku}\ \emph {et~al.}(2018{\natexlab{b}})\citenamefont
  {Ku}, \citenamefont {Chen}, \citenamefont {Budroni}, \citenamefont
  {Miranowicz}, \citenamefont {Chen},\ and\ \citenamefont {Nori}}]{Ku18a}%
  \BibitemOpen
  \bibfield  {author} {\bibinfo {author} {\bibfnamefont {H.-Y.}\ \bibnamefont
  {Ku}}, \bibinfo {author} {\bibfnamefont {S.-L.}\ \bibnamefont {Chen}},
  \bibinfo {author} {\bibfnamefont {C.}~\bibnamefont {Budroni}}, \bibinfo
  {author} {\bibfnamefont {A.}~\bibnamefont {Miranowicz}}, \bibinfo {author}
  {\bibfnamefont {Y.-N.}\ \bibnamefont {Chen}}, \ and\ \bibinfo {author}
  {\bibfnamefont {F.}~\bibnamefont {Nori}},\ }\bibfield  {title} {\enquote
  {\bibinfo {title} {{E}instein-{P}odolsky-{R}osen steering: Its geometric
  quantification and witness},}\ }\href {\doibase 10.1103/PhysRevA.97.022338}
  {\bibfield  {journal} {\bibinfo  {journal} {Phys. Rev. A}\ }\textbf {\bibinfo
  {volume} {97}},\ \bibinfo {pages} {022338} (\bibinfo {year}
  {2018}{\natexlab{b}})}\BibitemShut {NoStop}%
\bibitem [{\citenamefont {Gallego}\ \emph {et~al.}(2010)\citenamefont
  {Gallego}, \citenamefont {Brunner}, \citenamefont {Hadley},\ and\
  \citenamefont {Ac\'{\i}n}}]{Gallego10}%
  \BibitemOpen
  \bibfield  {author} {\bibinfo {author} {\bibfnamefont {R.}~\bibnamefont
  {Gallego}}, \bibinfo {author} {\bibfnamefont {N.}~\bibnamefont {Brunner}},
  \bibinfo {author} {\bibfnamefont {C.}~\bibnamefont {Hadley}}, \ and\ \bibinfo
  {author} {\bibfnamefont {A.}~\bibnamefont {Ac\'{\i}n}},\ }\bibfield  {title}
  {\enquote {\bibinfo {title} {Device-independent tests of classical and
  quantum dimensions},}\ }\href {\doibase 10.1103/PhysRevLett.105.230501}
  {\bibfield  {journal} {\bibinfo  {journal} {Phys. Rev. Lett.}\ }\textbf
  {\bibinfo {volume} {105}},\ \bibinfo {pages} {230501} (\bibinfo {year}
  {2010})}\BibitemShut {NoStop}%
\bibitem [{\citenamefont {Bancal}\ \emph {et~al.}(2011)\citenamefont {Bancal},
  \citenamefont {Gisin}, \citenamefont {Liang},\ and\ \citenamefont
  {Pironio}}]{Bancal11}%
  \BibitemOpen
  \bibfield  {author} {\bibinfo {author} {\bibfnamefont {J.-D.}\ \bibnamefont
  {Bancal}}, \bibinfo {author} {\bibfnamefont {N.}~\bibnamefont {Gisin}},
  \bibinfo {author} {\bibfnamefont {Y.-C.}\ \bibnamefont {Liang}}, \ and\
  \bibinfo {author} {\bibfnamefont {S.}~\bibnamefont {Pironio}},\ }\bibfield
  {title} {\enquote {\bibinfo {title} {Device-independent witnesses of genuine
  multipartite entanglement},}\ }\href {\doibase
  10.1103/PhysRevLett.106.250404} {\bibfield  {journal} {\bibinfo  {journal}
  {Phys. Rev. Lett.}\ }\textbf {\bibinfo {volume} {106}},\ \bibinfo {pages}
  {250404} (\bibinfo {year} {2011})}\BibitemShut {NoStop}%
\bibitem [{\citenamefont {Cavalcanti}\ \emph {et~al.}(2012)\citenamefont
  {Cavalcanti}, \citenamefont {Rabelo},\ and\ \citenamefont
  {Scarani}}]{Cavalcanti12}%
  \BibitemOpen
  \bibfield  {author} {\bibinfo {author} {\bibfnamefont {D.}~\bibnamefont
  {Cavalcanti}}, \bibinfo {author} {\bibfnamefont {R.}~\bibnamefont {Rabelo}},
  \ and\ \bibinfo {author} {\bibfnamefont {V.}~\bibnamefont {Scarani}},\
  }\bibfield  {title} {\enquote {\bibinfo {title} {Nonlocality tests enhanced
  by a third observer},}\ }\href {\doibase 10.1103/PhysRevLett.108.040402}
  {\bibfield  {journal} {\bibinfo  {journal} {Phys. Rev. Lett.}\ }\textbf
  {\bibinfo {volume} {108}},\ \bibinfo {pages} {040402} (\bibinfo {year}
  {2012})}\BibitemShut {NoStop}%
\bibitem [{\citenamefont {Ac\'{\i}n}\ \emph {et~al.}(2007)\citenamefont
  {Ac\'{\i}n}, \citenamefont {Brunner}, \citenamefont {Gisin}, \citenamefont
  {Massar}, \citenamefont {Pironio},\ and\ \citenamefont {Scarani}}]{Acin07}%
  \BibitemOpen
  \bibfield  {author} {\bibinfo {author} {\bibfnamefont {A.}~\bibnamefont
  {Ac\'{\i}n}}, \bibinfo {author} {\bibfnamefont {N.}~\bibnamefont {Brunner}},
  \bibinfo {author} {\bibfnamefont {N.}~\bibnamefont {Gisin}}, \bibinfo
  {author} {\bibfnamefont {S.}~\bibnamefont {Massar}}, \bibinfo {author}
  {\bibfnamefont {S.}~\bibnamefont {Pironio}}, \ and\ \bibinfo {author}
  {\bibfnamefont {V.}~\bibnamefont {Scarani}},\ }\bibfield  {title} {\enquote
  {\bibinfo {title} {Device-independent security of quantum cryptography
  against collective attacks},}\ }\href {\doibase
  10.1103/PhysRevLett.98.230501} {\bibfield  {journal} {\bibinfo  {journal}
  {Phys. Rev. Lett.}\ }\textbf {\bibinfo {volume} {98}},\ \bibinfo {pages}
  {230501} (\bibinfo {year} {2007})}\BibitemShut {NoStop}%
\bibitem [{\citenamefont {Werner}(1989)}]{Werner89}%
  \BibitemOpen
  \bibfield  {author} {\bibinfo {author} {\bibfnamefont {R.~F.}\ \bibnamefont
  {Werner}},\ }\bibfield  {title} {\enquote {\bibinfo {title} {Quantum states
  with {E}instein-{P}odolsky-{R}osen correlations admitting a hidden-variable
  model},}\ }\href {http://link.aps.org/doi/10.1103/PhysRevA.40.4277}
  {\bibfield  {journal} {\bibinfo  {journal} {Phys. Rev. A}\ }\textbf {\bibinfo
  {volume} {40}},\ \bibinfo {pages} {4277--4281} (\bibinfo {year}
  {1989})}\BibitemShut {NoStop}%
\bibitem [{\citenamefont {Buscemi}(2012)}]{Francesco12}%
  \BibitemOpen
  \bibfield  {author} {\bibinfo {author} {\bibfnamefont {F.}~\bibnamefont
  {Buscemi}},\ }\bibfield  {title} {\enquote {\bibinfo {title} {All entangled
  quantum states are nonlocal},}\ }\href {\doibase
  10.1103/PhysRevLett.108.200401} {\bibfield  {journal} {\bibinfo  {journal}
  {Phys. Rev. Lett.}\ }\textbf {\bibinfo {volume} {108}},\ \bibinfo {pages}
  {200401} (\bibinfo {year} {2012})}\BibitemShut {NoStop}%
\bibitem [{\citenamefont {Branciard}\ \emph {et~al.}(2013)\citenamefont
  {Branciard}, \citenamefont {Rosset}, \citenamefont {Liang},\ and\
  \citenamefont {Gisin}}]{Branciard13}%
  \BibitemOpen
  \bibfield  {author} {\bibinfo {author} {\bibfnamefont {C.}~\bibnamefont
  {Branciard}}, \bibinfo {author} {\bibfnamefont {D.}~\bibnamefont {Rosset}},
  \bibinfo {author} {\bibfnamefont {Y.-C.}\ \bibnamefont {Liang}}, \ and\
  \bibinfo {author} {\bibfnamefont {N.}~\bibnamefont {Gisin}},\ }\bibfield
  {title} {\enquote {\bibinfo {title} {Measurement-device-independent
  entanglement witnesses for all entangled quantum states},}\ }\href {\doibase
  10.1103/PhysRevLett.110.060405} {\bibfield  {journal} {\bibinfo  {journal}
  {Phys. Rev. Lett.}\ }\textbf {\bibinfo {volume} {110}},\ \bibinfo {pages}
  {060405} (\bibinfo {year} {2013})}\BibitemShut {NoStop}%
\bibitem [{\citenamefont {Rosset}\ \emph
  {et~al.}(2018{\natexlab{a}})\citenamefont {Rosset}, \citenamefont {Martin},
  \citenamefont {Verbanis}, \citenamefont {Lim},\ and\ \citenamefont
  {Thew}}]{Rosset18b}%
  \BibitemOpen
  \bibfield  {author} {\bibinfo {author} {\bibfnamefont {D.}~\bibnamefont
  {Rosset}}, \bibinfo {author} {\bibfnamefont {A.}~\bibnamefont {Martin}},
  \bibinfo {author} {\bibfnamefont {E.}~\bibnamefont {Verbanis}}, \bibinfo
  {author} {\bibfnamefont {C.~C.~W.}\ \bibnamefont {Lim}}, \ and\ \bibinfo
  {author} {\bibfnamefont {R.}~\bibnamefont {Thew}},\ }\bibfield  {title}
  {\enquote {\bibinfo {title} {Practical measurement-device-independent
  entanglement quantification},}\ }\href {\doibase 10.1103/PhysRevA.98.052332}
  {\bibfield  {journal} {\bibinfo  {journal} {Phys. Rev. A}\ }\textbf {\bibinfo
  {volume} {98}},\ \bibinfo {pages} {052332} (\bibinfo {year}
  {2018}{\natexlab{a}})}\BibitemShut {NoStop}%
\bibitem [{\citenamefont {Shahandeh}\ \emph {et~al.}(2017)\citenamefont
  {Shahandeh}, \citenamefont {Hall},\ and\ \citenamefont
  {Ralph}}]{Shahandeh17}%
  \BibitemOpen
  \bibfield  {author} {\bibinfo {author} {\bibfnamefont {F.}~\bibnamefont
  {Shahandeh}}, \bibinfo {author} {\bibfnamefont {M.~J.~W.}\ \bibnamefont
  {Hall}}, \ and\ \bibinfo {author} {\bibfnamefont {T.~C.}\ \bibnamefont
  {Ralph}},\ }\bibfield  {title} {\enquote {\bibinfo {title}
  {Measurement-device-independent approach to entanglement measures},}\ }\href
  {\doibase 10.1103/PhysRevLett.118.150505} {\bibfield  {journal} {\bibinfo
  {journal} {Phys. Rev. Lett.}\ }\textbf {\bibinfo {volume} {118}},\ \bibinfo
  {pages} {150505} (\bibinfo {year} {2017})}\BibitemShut {NoStop}%
\bibitem [{\citenamefont {Verbanis}\ \emph {et~al.}(2016)\citenamefont
  {Verbanis}, \citenamefont {Martin}, \citenamefont {Rosset}, \citenamefont
  {Lim}, \citenamefont {Thew},\ and\ \citenamefont {Zbinden}}]{Verbanis16}%
  \BibitemOpen
  \bibfield  {author} {\bibinfo {author} {\bibfnamefont {E.}~\bibnamefont
  {Verbanis}}, \bibinfo {author} {\bibfnamefont {A.}~\bibnamefont {Martin}},
  \bibinfo {author} {\bibfnamefont {D.}~\bibnamefont {Rosset}}, \bibinfo
  {author} {\bibfnamefont {C.~C.~W.}\ \bibnamefont {Lim}}, \bibinfo {author}
  {\bibfnamefont {R.~T.}\ \bibnamefont {Thew}}, \ and\ \bibinfo {author}
  {\bibfnamefont {H.}~\bibnamefont {Zbinden}},\ }\bibfield  {title} {\enquote
  {\bibinfo {title} {Resource-efficient measurement-device-independent
  entanglement witness},}\ }\href {\doibase 10.1103/PhysRevLett.116.190501}
  {\bibfield  {journal} {\bibinfo  {journal} {Phys. Rev. Lett.}\ }\textbf
  {\bibinfo {volume} {116}},\ \bibinfo {pages} {190501} (\bibinfo {year}
  {2016})}\BibitemShut {NoStop}%
\bibitem [{\citenamefont {Cavalcanti}\ \emph {et~al.}(2017)\citenamefont
  {Cavalcanti}, \citenamefont {Skrzypczyk},\ and\ \citenamefont {\ifmmode
  \check{S}\else \v{S}\fi{}upi\ifmmode~\acute{c}\else
  \'{c}\fi{}}}]{Cavalcanti17}%
  \BibitemOpen
  \bibfield  {author} {\bibinfo {author} {\bibfnamefont {D.}~\bibnamefont
  {Cavalcanti}}, \bibinfo {author} {\bibfnamefont {P.}~\bibnamefont
  {Skrzypczyk}}, \ and\ \bibinfo {author} {\bibfnamefont {I.}~\bibnamefont
  {\ifmmode \check{S}\else \v{S}\fi{}upi\ifmmode~\acute{c}\else \'{c}\fi{}}},\
  }\bibfield  {title} {\enquote {\bibinfo {title} {All entangled states can
  demonstrate nonclassical teleportation},}\ }\href {\doibase
  10.1103/PhysRevLett.119.110501} {\bibfield  {journal} {\bibinfo  {journal}
  {Phys. Rev. Lett.}\ }\textbf {\bibinfo {volume} {119}},\ \bibinfo {pages}
  {110501} (\bibinfo {year} {2017})}\BibitemShut {NoStop}%
\bibitem [{\citenamefont {Rosset}\ \emph
  {et~al.}(2018{\natexlab{b}})\citenamefont {Rosset}, \citenamefont {Buscemi},\
  and\ \citenamefont {Liang}}]{Rosset18}%
  \BibitemOpen
  \bibfield  {author} {\bibinfo {author} {\bibfnamefont {D.}~\bibnamefont
  {Rosset}}, \bibinfo {author} {\bibfnamefont {F.}~\bibnamefont {Buscemi}}, \
  and\ \bibinfo {author} {\bibfnamefont {Y.-C.}\ \bibnamefont {Liang}},\
  }\bibfield  {title} {\enquote {\bibinfo {title} {Resource theory of quantum
  memories and their faithful verification with minimal assumptions},}\ }\href
  {\doibase 10.1103/PhysRevX.8.021033} {\bibfield  {journal} {\bibinfo
  {journal} {Phys. Rev. X}\ }\textbf {\bibinfo {volume} {8}},\ \bibinfo {pages}
  {021033} (\bibinfo {year} {2018}{\natexlab{b}})}\BibitemShut {NoStop}%
\bibitem [{\citenamefont {Cavalcanti}\ \emph {et~al.}(2013)\citenamefont
  {Cavalcanti}, \citenamefont {Hall},\ and\ \citenamefont
  {Wiseman}}]{Cavalcanti13}%
  \BibitemOpen
  \bibfield  {author} {\bibinfo {author} {\bibfnamefont {E.~G.}\ \bibnamefont
  {Cavalcanti}}, \bibinfo {author} {\bibfnamefont {M.~J.~W.}\ \bibnamefont
  {Hall}}, \ and\ \bibinfo {author} {\bibfnamefont {H.~M.}\ \bibnamefont
  {Wiseman}},\ }\bibfield  {title} {\enquote {\bibinfo {title} {Entanglement
  verification and steering when {A}lice and {B}ob cannot be trusted},}\ }\href
  {\doibase 10.1103/PhysRevA.87.032306} {\bibfield  {journal} {\bibinfo
  {journal} {Phys. Rev. A}\ }\textbf {\bibinfo {volume} {87}},\ \bibinfo
  {pages} {032306} (\bibinfo {year} {2013})}\BibitemShut {NoStop}%
\bibitem [{\citenamefont {Kocsis}\ \emph {et~al.}(2015)\citenamefont {Kocsis},
  \citenamefont {Hall}, \citenamefont {Bennet}, \citenamefont {Saunders},\ and\
  \citenamefont {Pryde}}]{Kocsis15}%
  \BibitemOpen
  \bibfield  {author} {\bibinfo {author} {\bibfnamefont {S.}~\bibnamefont
  {Kocsis}}, \bibinfo {author} {\bibfnamefont {M.~J.~W.}\ \bibnamefont {Hall}},
  \bibinfo {author} {\bibfnamefont {A.~J.}\ \bibnamefont {Bennet}}, \bibinfo
  {author} {\bibfnamefont {D.~J.}\ \bibnamefont {Saunders}}, \ and\ \bibinfo
  {author} {\bibfnamefont {G.~J.}\ \bibnamefont {Pryde}},\ }\bibfield  {title}
  {\enquote {\bibinfo {title} {Experimental measurement-device-independent
  verification of quantum steering},}\ }\href {\doibase 10.1038/ncomms6886}
  {\bibfield  {journal} {\bibinfo  {journal} {Nat. Commun}\ }\textbf {\bibinfo
  {volume} {6}},\ \bibinfo {pages} {5886} (\bibinfo {year} {2015})}\BibitemShut
  {NoStop}%
\bibitem [{\citenamefont {Zhang}\ \emph {et~al.}(2019)\citenamefont {Zhang},
  \citenamefont {Zhang}, \citenamefont {Hu}, \citenamefont {Liu}, \citenamefont
  {Huang}, \citenamefont {Li},\ and\ \citenamefont {Guo}}]{Zhang19}%
  \BibitemOpen
  \bibfield  {author} {\bibinfo {author} {\bibfnamefont {H.}~\bibnamefont
  {Zhang}}, \bibinfo {author} {\bibfnamefont {C.}~\bibnamefont {Zhang}},
  \bibinfo {author} {\bibfnamefont {X.-M.}\ \bibnamefont {Hu}}, \bibinfo
  {author} {\bibfnamefont {B.-H.}\ \bibnamefont {Liu}}, \bibinfo {author}
  {\bibfnamefont {Y.-F.}\ \bibnamefont {Huang}}, \bibinfo {author}
  {\bibfnamefont {C.-F.}\ \bibnamefont {Li}}, \ and\ \bibinfo {author}
  {\bibfnamefont {G.-C.}\ \bibnamefont {Guo}},\ }\bibfield  {title} {\enquote
  {\bibinfo {title} {{An arbitrary two-particle high-dimensional Bell state
  measurement by auxiliary entanglement}},}\ }\href@noop {} {\bibfield
  {journal} {\bibinfo  {journal} {arXiv e-prints}\ ,\ \bibinfo {eid}
  {arXiv:1901.01373}} (\bibinfo {year} {2019})},\ \Eprint
  {http://arxiv.org/abs/1901.01373} {arXiv:1901.01373 [quant-ph]} \BibitemShut
  {NoStop}%
\bibitem [{sup()}]{suppl}%
  \BibitemOpen
  \href@noop {} {\bibinfo  {journal} {See Supplemental Material at [URL will be
  inserted by the publisher] for further details on (1) the derivation of MDI
  steering witnesses (MDI-SWs) for all steerable assemblages, (2) the proof of
  the optimal measurement in Eq.~(7) and the equvalence between MDI-SM and the
  steering fraction, (3) the proof of the equavalence between the steering
  robustness and the steering fraction, (4) general construction of MDI-SWs and
  the MDI-SM, and (5) the analytical expression of the MDI-SM for the two-qubit
  Werner states}\ }\BibitemShut {NoStop}%
\bibitem [{\citenamefont {Jeon}\ and\ \citenamefont {Jeong}(2019)}]{Jeon19}%
  \BibitemOpen
\bibfield  {journal} {  }\bibfield  {author} {\bibinfo {author} {\bibfnamefont
  {I.}~\bibnamefont {Jeon}}\ and\ \bibinfo {author} {\bibfnamefont
  {H.}~\bibnamefont {Jeong}},\ }\bibfield  {title} {\enquote {\bibinfo {title}
  {Measurement-device-independent and arbitrarily loss-tolerant verification of
  quantum steering},}\ }\href {\doibase 10.1103/PhysRevA.99.012318} {\bibfield
  {journal} {\bibinfo  {journal} {Phys. Rev. A}\ }\textbf {\bibinfo {volume}
  {99}},\ \bibinfo {pages} {012318} (\bibinfo {year} {2019})}\BibitemShut
  {NoStop}%
\bibitem [{\citenamefont {Bowles}\ \emph {et~al.}(2018)\citenamefont {Bowles},
  \citenamefont {\ifmmode \check{S}\else \v{S}\fi{}upi\ifmmode~\acute{c}\else
  \'{c}\fi{}}, \citenamefont {Cavalcanti},\ and\ \citenamefont
  {Ac\'{\i}n}}]{Bowles18}%
  \BibitemOpen
  \bibfield  {author} {\bibinfo {author} {\bibfnamefont {J.}~\bibnamefont
  {Bowles}}, \bibinfo {author} {\bibfnamefont {I.}~\bibnamefont {\ifmmode
  \check{S}\else \v{S}\fi{}upi\ifmmode~\acute{c}\else \'{c}\fi{}}}, \bibinfo
  {author} {\bibfnamefont {D.}~\bibnamefont {Cavalcanti}}, \ and\ \bibinfo
  {author} {\bibfnamefont {A.}~\bibnamefont {Ac\'{\i}n}},\ }\bibfield  {title}
  {\enquote {\bibinfo {title} {Device-independent entanglement certification of
  all entangled states},}\ }\href {\doibase 10.1103/PhysRevLett.121.180503}
  {\bibfield  {journal} {\bibinfo  {journal} {Phys. Rev. Lett.}\ }\textbf
  {\bibinfo {volume} {121}},\ \bibinfo {pages} {180503} (\bibinfo {year}
  {2018})}\BibitemShut {NoStop}%
\bibitem [{\citenamefont {{Zhao}}\ \emph {et~al.}(2019)\citenamefont {{Zhao}},
  \citenamefont {{Ku}}, \citenamefont {{Chen}}, \citenamefont {{Nori}},
  \citenamefont {{Chen}}, \citenamefont {{Xiang}}, \citenamefont {{Li}},\ and\
  \citenamefont {{Guo}}}]{Zhao2019}%
  \BibitemOpen
  \bibfield  {author} {\bibinfo {author} {\bibfnamefont {Yuan-Yuan}\
  \bibnamefont {{Zhao}}}, \bibinfo {author} {\bibfnamefont {Huan-Yu}\
  \bibnamefont {{Ku}}}, \bibinfo {author} {\bibfnamefont {Shin-Liang}\
  \bibnamefont {{Chen}}}, \bibinfo {author} {\bibfnamefont {Franco}\
  \bibnamefont {{Nori}}}, \bibinfo {author} {\bibfnamefont {Yueh-Nan}\
  \bibnamefont {{Chen}}}, \bibinfo {author} {\bibfnamefont {Guo-Yong}\
  \bibnamefont {{Xiang}}}, \bibinfo {author} {\bibfnamefont {Chuan-Feng}\
  \bibnamefont {{Li}}}, \ and\ \bibinfo {author} {\bibfnamefont {Guang-Can}\
  \bibnamefont {{Guo}}},\ }\bibfield  {title} {\enquote {\bibinfo {title}
  {{Experimental measurement-device-independent quantification of quantum
  steering}},}\ }\href@noop {} {\ ,\ \bibinfo {pages} {arXiv:1901.08298}
  (\bibinfo {year} {2019})}\BibitemShut {NoStop}%
\end{thebibliography}%

\clearpage
\appendix
\onecolumngrid
\begin{center}
{\bf \large Supplemental material}
\end{center}
\twocolumngrid

\section{MDI witnesses for all steerable assemblages}
For textural completeness, we first recall the standard steering witness. The set of all unsteerable assemblages $\mathsf{LHS}$ forms a convex set~\cite{Gallego15}. Therefore, for a given steerable assemblage $\{\sigma_{a|x}^{\rm{S}}\}$, there always exists a set of positive semidefinite operators $\{F_{a|x}\geq 0\}$, called a \emph{steering witness} $SW$, such that $\tr\sum_{a,x}F_{a|x}\,\sigma_{a|x}^{\rm{S}} > \alpha:= \max_{\{\sigma_{a|x}^{\rm{US}}\}\in\mathsf{LHS}} \tr \sum_{a,x} F_{a|x}\,\sigma_{a|x}^{\rm{US}}$, while $\tr\sum_{a,x}F_{a|x}\,\sigma_{a|x}^{\rm{US}} \leq \alpha \quad \forall \{\sigma_{a|x}^{\rm{US}}\}\in\mathsf{LHS}$.~\cite{Cavalcanti09,Pusey13,SDPreview17,Skrzypczyk14,Piani15}

The two conditions can be reformulated as follows:
\begin{equation}
\tr\sum_{a,x}\left(F_{a|x}-\frac{\alpha}{|\mathcal{X}|}\openone\right)\sigma_{a|x}^{\rm{S}}> 0,
\label{Eq_Sep_Th_1}
\end{equation}
while
\begin{equation}
\tr\sum_{a,x}\left(F_{a|x}-\frac{\alpha}{|\mathcal{X}|}\openone\right)\sigma_{a|x}^{\rm{US}}\leq 0 \quad\forall \{\sigma_{a|x}^{\rm{US}}\}\in\mathsf{LHS},
\label{Eq_Sep_Th_2}
\end{equation}
where $|\mathcal{X}|$ denotes the number of elements in $\mathcal{X}$, i.e., the number of the measurement settings.

Motivated by the result from Refs.~\cite{Branciard13,Kocsis15,Cavalcanti13}, here we show how to systematically construct a collection of steering witnesses in a MDI scheme, dubbed MDI-SWs. It is MDI since we certify steerability based only on the statistics $\{p(a,b|x,\omega_y)\}$ and on the fact that $\{\omega_y\}$ is a tomographically complete set. In what follows, we would like to address the problem under the framework of the resource theory of steering~\cite{Gallego15}, i.e., we will certify steerability of the underlying assemblage $\{\sigma_{a|x}\}$ instead of the quantum state $\rho_{AB}$.

Under the framework of the resource theory of steering~\cite{Gallego15}, the correlation is obtained from Bob's joint measurement on the assemblage, i.e., $p(a,1|x,\omega_y)=\text{tr}(\EBB \sigma_{a|x}\otimes\omega_y)$. The average payoff of an assemblage can then be defined as
\begin{equation}
I\Big(\mathbf{P}(\{\sigma_{a|x}\}),\beta\Big)=\sum_{a,x,y}\beta^{x,y}_{a,1}\,p(a,1|x,\omega_y).
\label{payoff3}
\end{equation}
where $\mathbf{P}(\{\sigma_{a|x}\}):=\{p(a,1|x,\omega_y)\}$.

Now we show that for any given steerable assemblage, one can properly choose a set of coefficients $\beta:=\{\beta_{a,1}^{x,y}\}$, such that $I\Big(\mathbf{P}(\{\sigma_{a|x}\}),\beta\Big)$ is a steering witness of the steerable assemblage. That is,
\begin{equation}
\begin{aligned}
\text{given}\quad & \{\sigma_{a|x}\}\notin\mathsf{LHS},\quad\exists \beta:=\{\beta_{a,1}^{x,y}\}\\
\text{such that}\quad & I(\{\sigma_{a|x}\},\beta)>0,\\
& I(\{\sigma_{a|x}^{\text{US}}\},\beta)\leq 0 \quad\forall \{\sigma_{a|x}^{\text{US}}\} \in\mathsf{LHS}.
\end{aligned}
\label{Eq_exist_MDISW}
\end{equation}

\begin{proof}
Since the set of Bob's input quantum states $\{\omega_{y}\}$ is a tomographically complete set, it can be used to span all Hermitian matrices of the same dimension:
\begin{equation}
\begin{aligned}
\text{given}&\quad\{F_{a|x}\}~\&~\{\omega_y\},\quad\exists\{\beta_{a,1}^{x,y}\}\\
\text{such that}&\quad F_{a|x}-\frac{\alpha}{|\mathcal{X}|}\openone=\sum_{y}\beta^{x,y}_{a,1}\,\omega_{y}^\intercal~~\forall a,x,
\end{aligned}
\label{Eq_Span_Fax}
\end{equation}
where $\{F_{a|x}\}$ is a SW of the assemblage and $\beta_{a,1}^{x,y}$ is a set of some real numbers. The transposition $\intercal$ is for convenience, as will be shown later.

(i) First we prove the second requirement of Eq.~\eqref{Eq_exist_MDISW}. Each component in the correlation $\{p(a,1|x,\omega_y)\}$ admitting a LHS model can be expressed as
\begin{equation}
\begin{aligned}
p(a,1|x,\omega_y)&=
\tr\left[\EBB(\sigma_{a|x}\otimes \omega_{y})  \right]\\
&=\sum_{\lambda}p(\lambda)\,p(a|x,\lambda)\tr\left[(\tilde{E}^{{ B}_0}_{1,\lambda}\omega_{y})  \right],
\end{aligned}
\end{equation}
where $\tilde{E}^{B_0}_{1,\lambda}:=\tr_{B}[\EBB(\sigma_{\lambda}\otimes\openone)]$ is an effective POVM element. The payoff of the assemblage is then written as
\begin{equation}
\begin{aligned}
&I(\{\sigma_{a|x}\},\beta) := \sum_{a,x,y}\beta^{x,y}_{a,1}p(a,1|x,\omega_y)\\
&=\sum_{a,x,\lambda}p(\lambda)p(a|x,\lambda)\tr\left[\tilde{E}^{{ B}_0}_{1,\lambda}\left(\sum_{y}\beta^{x,y}_{a,1}\omega_{y}\right)  \right]\\
&=\tr\left[\sum_{a,x}\left(F_{a|x}-\frac{\alpha}{|\mathcal{X}|}\openone\right)\sum_\lambda p(\lambda)p(a|x,\lambda) (\tilde{E}^{{ B}_0}_{1,\lambda})^\intercal \right]\\
&\leq 0,
\end{aligned}
\end{equation}
where the inequality holds due to Eq.~\eqref{Eq_Sep_Th_2}.

(ii) Now we prove the first requirement of Eq.~\eqref{Eq_exist_MDISW}. We choose the joint measurement performed by Bob to be the projection onto the maximally entangled state $\ket{\PhiBB}=1/\sqrt{d_{B}}\sum_{i=1}^{d_{B}}\ket{i}\otimes \ket{i}$. Therefore, each component of the correlation can be expressed as
\begin{equation}
\begin{aligned}
p(a,1|x,\omega_y)&=\tr\left[\EBB(\sigma_{a|x}\otimes \omega_{y})  \right]\\
&=\tr\left[( \ket{\PhiBB}\bra{\PhiBB})(\sigma_{a|x}\otimes \omega_{y})  \right]\\
&=\tr\left[\omega_{y}^\intercal~\sigma_{a|x}  \right]/d_{B}.
\end{aligned}
\end{equation}
The average payoff is reformulated as
\begin{equation}
\begin{aligned}
&I(\{\sigma_{a|x}\},\beta):=\sum_{a,x,y}\beta^{x,y}_{a,1}\,p(a,1|x,\omega_y)\\
&=\sum_{a,x}\tr\left[\left(\sum_{y}\beta^{x,y}_{a,1}\omega_{y}^\intercal\right)\sigma_{a|x}^{\text{S}}   \right]/d_{B}\\
&=\sum_{a,x}\tr\left[\left(F_{a|x}- \frac{\alpha}{|\mathcal{X}|}\openone \right) \sigma_{a|x}^{\text{S}}\right]/d_{\text{\tiny B}}>0,
\end{aligned}
\end{equation}
where the inequality holds according to Eq.~\eqref{Eq_Sep_Th_1}.
\end{proof}

\section{The equivalence between the MDI measure of steerability and the steering fraction}
Let us now rewrite the definition of the MDI steering measure (MDI-SM), i.e., Eq.~\eqref{Eq_SOMDI} in the main text
\begin{equation}
\SMDI:=\max\left\{\SOMDI-1,0\right\},
\label{Eq_Eq_appen_SOMDI2_SMDI}
\end{equation}
where
\begin{equation}
\SOMDI := \sup_{\beta,\mathbf{P}}\frac{\sum_{a,x,y}\beta^{x,y}_{a,1}\,p(a,1|x,\omega_y)}{\sup_{\bar{\mathbf{P}}\in\mathsf{LHS'}} ~\sum_{a,x,y}\beta^{x,y}_{a,1}\,\bar{p}(a,1|x,\omega_y)}.
\label{Eq_appen_SOMDI}
\end{equation}
By replacing $p(a,b|x,\omega_y)$ with $\tr[\EBB \sigma_{a|x}\otimes\omega_y]$ and using the spanned relation $F_{a|x} = \sum_y \beta_{a,1}^{x,y}\,\omega_y^\intercal$, then $\SOMDI$ can be reformulated as [i.e., Eq.~\eqref{Eq_SOMDI2}]:
\begin{equation}
\begin{aligned}
&\SOMDI\\
&= \sup_{\mathbf{F}\geq 0, E_1}\frac{\sum_{a,x}\tr\left[\EBB(\sigma_{a|x}\otimes F_{a|x}^\intercal)\right]}{\sup_{\tau\in\mathsf{LHS}}~ \sum_{a,x}\tr\left[\EBB(\tau_{a|x}\otimes F_{a|x}^\intercal)\right]},
\end{aligned}
\label{Eq_appen_SOMDI2}
\end{equation}
where $\mathbf{F}\geq 0$ denotes $\{F_{a|x}\geq 0\}$ for brevity. Since $E_1^{BB_0}$ is a POVM element, it is diagonalizable and can be taken as a linear combination of rank-$1$ projectors with coefficients lying between 0 and 1. Since any rank-$k$ projector can be produced by acting a separable operation on the maximally entangled state, $\EBB$ can be written as
\begin{equation}
\begin{aligned}
\EBB &= \sum_{k,i}u(k)\tilde{A}^{k}_{i}\otimes \tilde{B}^{k}_{i}\ket{\Phi}\bra{\Phi}\tilde{A}^{k\dagger}_{i}\otimes \tilde{B}^{k\dagger}_{i},\\
&= \sum_{k,i}A^{k}_{i}\otimes B^{k}_{i}\ket{\Phi}\bra{\Phi}A^{k\dagger}_{i}\otimes B^{k\dagger}_{i},\\
\end{aligned}
\end{equation}
where $u(k)$ denotes the coefficients between $0$ and $1$, $A_{i}^{k}\otimes B_{i}^{k}=\sqrt{u(k)}\tilde{A}^{k}_{i}\,\otimes \tilde{B}^{k}_{i}$ is the redefined Kraus operators for each $i$, and (for brevity) $\ket{\Phi}$ denotes $\ket{\PhiBB}$. Then, we can proceed to write $\SOMDI$ as
\onecolumngrid
\vspace{\columnsep}
\begin{center}
\begin{equation}
\begin{aligned}
&\sup_{\mathbf{F}\geq 0,A^{k}_{i},B^k_i}\frac{\sum_{a,x,k,i}\tr\left[A^{k}_{i}\otimes B^{k}_{i}\ket{\Phi}\bra{\Phi}A^{k\dagger}_{i}\otimes B^{k\dagger}_{i}(\sigma_{a|x}\otimes F_{a|x}^{\intercal})\right]}{\sup_{\tau\in\mathsf{LHS}}\sum_{a,x,k,i}\tr\left[A^{k}_{i}\otimes B^{k}_{i}\ket{\Phi}\bra{\Phi}A^{k\dagger}_{i}\otimes B^{k\dagger}_{i}(\tau_{a|x}\otimes F_{a|x}^{\intercal})\right]}\\
&=\sup_{\mathbf{F}\geq 0,A^{k}_{i},B^k_i}\frac{\sum_{a,x,k,i}\bra{\Phi}A^{k\dagger}_{i}\otimes B^{k\dagger}_{i}(\sigma_{a|x}\otimes F_{a|x}^{\intercal})A^{k}_{i}\otimes B^{k}_{i}\ket{\Phi}}{\sup_{\tau\in\mathsf{LHS}}\sum_{a,x,k,i}\bra{\Phi}A^{k\dagger}_{i}\otimes B^{k\dagger}_{i}(\tau_{a|x}\otimes F_{a|x}^{\intercal})A^{k}_{i}\otimes B^{k}_{i}\ket{\Phi}}\\
&=\sup_{\mathbf{F}\geq 0,A^{k}_{i},B^k_i}\frac{\sum_{a,x,k,i}\bra{\Phi}(A^{k\dagger}_{i}\sigma_{a|x}A^{k}_{i})\otimes (B^{k\dagger}_{i} F_{a|x}^{\intercal}B^{k}_{i})\ket{\Phi}}{\sup_{\tau\in\mathsf{LHS}}\sum_{a,x,k,i}\bra{\Phi}(A^{k\dagger}_{i}\tau_{a|x}A^{k}_{i})\otimes( B^{k\dagger}_{i} F_{a|x}^{\intercal}B^{k}_{i})\ket{\Phi}}\\
&=\sup_{\mathbf{F}\geq 0,A^{k}_{i},B^k_i}\frac{\sum_{a,x,k,i}\tr\left[A^{k\dagger}_{i}\sigma_{a|x}A^{k}_{i}\cdot B^{k\intercal}_{i} F_{a|x}B^{k\dagger^\intercal}_{i}\right]}{\sup_{\tau\in\mathsf{LHS}}\sum_{a,x,k,i}\tr\left[A^{k\dagger}_{i}\tau_{a|x}A^{k}_{i}\cdot B^{k\intercal}_{i} F_{a|x}B^{k\dagger^\intercal}_{i}\right]}\\
&=\sup_{\mathbf{F}\geq 0,A^{k}_{i},B^k_i}\frac{\sum_{a,x}\tr\left[\sigma_{a|x}\sum_{k,i}A^{k}_{i} B^{k\intercal}_{i} F_{a|x}B^{k\dagger^\intercal}_{i}A^{k\dagger}_{i}\right]}{\sup_{\tau\in\mathsf{LHS}}\sum_{a,x}\tr\left[\tau_{a|x}\sum_{k,i}A^{k}_{i} B^{k\intercal}_{i} F_{a|x}B^{k\dagger^\intercal}_{i}A^{k\dagger}_{i}\right]}\\
&\leq\sup_{\mathbf{F}\geq 0}\frac{\sum_{a,x}\tr\left[\sigma_{a|x} F_{a|x} \right]}{\sup_{\tau\in\mathsf{LHS}}\sum_{a,x}\tr\left[\tau_{a|x}F_{a|x}\right]}.
\end{aligned}
\label{Eq_proof_SF}
\end{equation}
\end{center}
\vspace{\columnsep}

\twocolumngrid
The inequality is due to the fact that the convex set $\mathbf{F}$ is a superset of the one after performing the completely positive map, i.e., $\mathbf{F}':=\{\sum_{k,i}A^{k}_{i} B^{k\intercal}_{i} F_{a|x}B^{k\dagger^\intercal}_{i}A^{k\dagger}_{i}\}_{a,x}$. The last quantity is exactly the steering fraction proposed by~\cite{Hsieh16}. From the result of the next section, we obtain that $\SMDI$ is also the same as the steering robustness.


\section{Proof of the equivalence between the steering fraction and the steering robustness}
In this section, we explicitly prove the equivalence between the steering fraction and  the steering robustness, although their equivalence is implicitly mentioned in some references (see, e.g., Ref.~\cite{SDPreview17}). The steering robustness of a given assemblage can be obtained by the dual program described in Eq.~\eqref{Eq_Fax_SR}. On the other hand, the steering fraction ($S_F$) of the given assemblage is defined as~\cite{Hsieh16}
\begin{equation}
\begin{aligned}
S_F + 1 = \max_{\mathbf{F}\geq 0}\quad \frac{\tr\sum_{a,x}F_{a|x}\,\sigma_{a|x}}{\max_{\tau\in\mathsf{LHS}}\tr\sum_{a|x}F_{a|x}\,\tau_{a|x}}.
\end{aligned}
\label{Eq_SF}
\end{equation}
We can rewrite it as
\begin{equation}
S_F +1 = \max_{\tilde{\mathbf{F}}\geq 0}\quad \tr\sum_{a,x}\tilde{F}_{a|x}\,\sigma_{a|x},
\end{equation}
where
\begin{equation}
\tilde{F}_{a|x}:=\frac{F_{a|x}}{\max_{\tau\in\mathsf{LHS}}\tr\sum_{a|x}F_{a|x}\,\tau_{a|x}}\geq 0.
\end{equation}
Therefore, to prove the equivalence between Eqs.~\eqref{Eq_Fax_SR} and \eqref{Eq_SF}, it is equivalent to prove
\begin{equation}
\sum_{a,x} D(a|x,\lambda)\tilde{F}_{a|x}\leq\openone\quad\forall \lambda.
\end{equation}
\begin{proof}
For each $\lambda$, the quantity $\openone - \sum_{a,x} D(a|x,\lambda)\tilde{F}_{a|x}$ is multiplied by a subnormalized quantum state $\rho_\lambda\geq 0$. We take the trace, and sum over all $\lambda$:
\begin{equation}
\begin{aligned}
&\tr\sum_\lambda\left(\openone-\frac{\sum_{a,x}D(a|x,\lambda)F_{a|x}}{\max_{\tau\in\mathsf{LHS}}\tr\sum_{a|x}F_{a|x}\tau_{a|x}}\right)\rho_\lambda\\
&=1-\frac{\tr\sum_{a,x}F_{a|x}\sigma_{a|x}^{\text{US}}}{\max_{\tau\in\mathsf{LHS}}\tr\sum_{a|x}F_{a|x}\tau_{a|x}},
\end{aligned}
\end{equation}
which is non-negative for all $\rho_\lambda\geq 0$ and $\lambda$. Since the only constraint between the free parameters $\rho_\lambda$ is $\tr\sum_\lambda\rho_\lambda=1$, we derive this condition
\begin{equation}
\openone - \frac{\sum_{a,x}D(a|x,\lambda)F_{a|x}}{\max_{\tau\in\mathsf{LHS}}\tr\sum_{a|x}F_{a|x}\tau_{a|x}} \geq 0\quad\forall\lambda.
\end{equation}
\end{proof}

\section{Construction of MDI-SWs and MDI-SMs from the standard steering witnesses}\label{Sec_method}
In this section, we provide an algorithmic method for constructing a set of coefficients $\beta$ of the MDI-SW and MDI-SM from the standard steering witness $\{F_{a|x}\}$. For a target steerable assemblage $\{\sigma_{a|x}\}$, one can construct a MDI-SW through the following steps:

\textbf{1.} Choose a tomographically complete set $\{\omega_y\}$ to be Bob's quantum inputs.

\textbf{2.} Consider the optimal standard steering witness $\{F_{a|x}\}$ of the target assemblage $\{\sigma_{a|x}\}$, which can be obtained either from the dual SDP program of the steerable weight $S_W$~\cite{Skrzypczyk14}
\begin{equation}
\begin{aligned}
\text{given}\quad &\{\sigma_{a|x}\}\\
S_W+1 = \min\quad &\tr\sum_{a,x}F_{a|x}\sigma_{a|x}\\
\text{such that}\quad &\sum_{a,x}D(a|x,\lambda)F_{a|x}\geq \openone\quad\forall\lambda\\
&F_{a|x}\geq 0\quad \forall a,x,
\end{aligned}
\label{Eq_Fax_SW}
\end{equation}
or from the dual SDP program of the steering robustness $S_R$~\cite{Piani15}:
\begin{equation}
\begin{aligned}
\text{given}\quad &\{\sigma_{a|x}\}\\
S_R+1 = \max\quad &\tr\sum_{a,x}F_{a|x}\sigma_{a|x}\\
\text{such that}\quad &\sum_{a,x}D(a|x,\lambda)F_{a|x}\leq \openone\quad\forall\lambda\\
&F_{a|x}\geq 0\quad \forall a,x.
\end{aligned}
\label{Eq_Fax_SR}
\end{equation}

\textbf{3.} Choose a set of coefficients $\beta:=\{\beta_{a,1}^{x,y}\}$ satisfying the spanned relation:
\begin{equation}
F_{a|x}-\frac{\openone}{|\mathcal{X}|} = \sum_y \beta_{a,1}^{x,y}\,\omega_y^\intercal\quad\forall a,x.
\end{equation}

\textbf{4.} Finally, $I(\mathbf{P}(\{\sigma_{a|x}\}),\{\beta\}) := \sum_{a,x,y}\beta^{x,y}_{a,1}p(a,1|x,\omega_y)$ is a MDI-SW. The negative value certifies the steerability if we consider the program Eq.~\eqref{Eq_Fax_SW} of the steerable weight in the second step, while the positive value certifies the steerability if we consider the steering robustness described by Eq.~\eqref{Eq_Fax_SR}.

To construct the MDI-SM of the given assemblage, we must follow these steps:

\textbf{1.} Choose a tomographically complete set $\{\omega_y\}$ to be Bob's quantum inputs.

\textbf{2.} Choose Bob's measurement to be in the basis $\{E_1^{BB_0},\openone-E_1^{BB_0}\}$, with $E_1^{BB_0}$ being the projection onto the maximally entangled state $(1/\sqrt{d})\sum_{ii}^d|ii\rangle$.

\textbf{3.} From the above two steps, one obtains the optimal correlation $\{p(a,1|x,\omega_y)=\tr(E_1^{BB_0}\sigma_{a|x}\otimes\omega_y)\}$.

\textbf{4.} Consider the optimal standard steering witness $\{F_{a|x}\}$ of the assemblage $\{\sigma_{a|x}\}$, which is obtained from the dual SDP program Eq.~\eqref{Eq_Fax_SR} of the steering robustness.

\textbf{5.} Choose a set of coefficients $\beta:=\{\beta_{a,1}^{x,y}\}$ satisfying the spanned relation:
\begin{equation}
F_{a|x} = \sum_y \beta_{a,1}^{x,y}\omega_y^\intercal\quad\forall a,x.
\end{equation}

\textbf{6.} Finally,
\begin{equation}
\max\Big\{\frac{\sum_{a,x,y}\beta^{x,y}_{a,1}p(a,1|x,\omega_y)}{\sup_{\bar{\mathbf{P}}\in\mathsf{LHS'}} ~\sum_{a,x,y}\beta^{x,y}_{a,1}\bar{p}(a,1|x,\omega_y)}-1,0\Big\}
\end{equation}
is the MDI-SM, where the denominator [see Eq.~\eqref{Eq_MDISMandSF}] is
\begin{equation}
\sup_{\bar{\mathbf{P}}\in\mathsf{LHS'}} ~\sum_{a,x,y}\beta^{x,y}_{a,1}\bar{p}(a,1|x,\omega_y)=\frac{1}{d}.
\end{equation}
One may find that the algorithmic method for constructing the MDI-SM is not genuine MDI since the assemblage has to be known. We have mentioned this in the last section of the main text, i.e., obtaining an optimal set $\beta$ in a MDI scenario is one of the open problems. We would also like to stress that the definition of the MDI-SM itself and the proof of the equivalence with the steering robustness are still in the MDI scheme.

\section{Analytical construction of the MDI steering measure for Werner states}

In this section, we provide an analytical construction of the MDI steering measure of an assemblage obtained from the two-qubit Werner state. The procedure is the same as the algorithmic method in Section~\ref{Sec_method} of this Supplementary material. To obtain the optimal standard steering witness $\{F_{a|x}\}$, we use a similar technique to the one used in Ref.~\cite{SDPreview17}. The two-qubit Werner state is written as
\begin{equation}
\rho_{AB}^{v}=v\ket{\Phi^{-}}\bra{\Phi^{-}}+(1-v)\openone/4,\quad v\in [0,1],
\end{equation}
where $\ket{\Phi^{-}}=(\ket{10}-\ket{01})/\sqrt{2}$ is the singlet state. We take the measurements performed by Alice to be in the bases of Pauli $X$ and $Z$. The corresponding assemblage is then given by~\cite{SDPreview17}
\begin{equation}
\sigma_{a|x}=v\frac{\openone+(-1)^{a+1}\hat{n}_x\cdot\vec{\sigma}}{2}+\frac{1-v}{4}\openone\quad\forall a,x,
\end{equation}
where $\hat{n}_{1}=(1,0,0)$ and $\hat{n}_{2}=(0,0,1)$ are vectors on the Bloch sphere, and $\vec{\sigma}=(X,Y,Z)$ is the set of Pauli matrices. Any two-dimensional Hermitian matrix $F_{a|x}$ can be expressed as $F_{a|x}=\gamma_{a|x}\openone+\vec{\kappa}_{a|x}\cdot\vec{\sigma}$, with $\gamma_{a|x}$ being a real number and $\vec{\kappa}_{a|x}$ being a three dimensional vector. Then, we arrive at
\begin{equation}
\begin{aligned}
\tr\sum_{a,x}F_{a|x}\sigma_{a|x}&=\frac{1}{2}(\gamma_{1|1}+\gamma_{2|1}+\gamma_{1|2}+\gamma_{2|2})\\&+\frac{v}{2}\left[\hat{n}_1\cdot(\vec{\kappa}_{1|1}-\vec{\kappa}_{2|1})+\hat{n}_2\cdot(\vec{\kappa}_{1|2}-\vec{\kappa}_{2|2})\right].
\end{aligned}
\end{equation}
The above quantity achieves its maximal value when $\vec{\kappa}_{a|x}$ are aligned or anti-aligned with $\hat{n}_{x}$. Specifically, we set $\vec{\kappa}_{1|1}\parallelsum \hat{n}_1$, $\vec{\kappa}_{2|1}\parallelsum -\hat{n}_1$, $\vec{\kappa}_{1|2}\parallelsum \hat{n}_2$, and $\vec{\kappa}_{2|2}\parallelsum -\hat{n}_2$. With the above conditions, we obtain
\begin{equation}
\begin{aligned}
\tr\sum_{a,x}F_{a|x}\sigma_{a|x}&=\frac{1}{2}(\gamma_{1|1}+\gamma_{2|1}+\gamma_{1|2}+\gamma_{2|2})\\&+\frac{v}{2}\left(\kappa_{1|1}+\kappa_{2|1}+\kappa_{1|2}+\kappa_{2|2}\right),
\end{aligned}
\label{alpha_W}
\end{equation}
where for each $a$ and $x$, $\kappa_{a|x}$ is a positive real number corresponding to the length of each vector $\vec{\kappa}_{a|x}$. In order to satisfy the second constraint in Eq.~\eqref{Eq_Fax_SR}, we have
\begin{equation}
\gamma_{a|x}\geq \kappa_{a|x}~~\forall~a,x.
\label{first_constraint}
\end{equation}
Considering all deterministic strategies $\lambda=(a_{x=1},a_{x=2})$ and the first constraint in Eq.~\eqref{Eq_Fax_SR}, we obtain the following inequalities:

\begin{table}
\centering  
\begin{tabular}{|c|c|c|c|c|c|c|c|} \hline
$\beta^{1,1}_{1,1}$ & $\beta^{2,1}_{1,1}$ & $\beta^{1,1}_{2,1}$& $\beta^{2,1}_{2,1}$&
$\beta^{1,2}_{1,1}$ & $\beta^{2,2}_{1,1}$ & $\beta^{1,2}_{2,1}$& $\beta^{2,2}_{2,1}$\\ \hline
$2\kappa$ & $\kappa$ & 0 & $\kappa$&0 & $\kappa$ & $2\kappa$ & $\kappa$\\ \hline
$\beta^{1,3}_{1,1}$ & $\beta^{2,3}_{1,1}$ & $\beta^{1,3}_{2,1}$& $\beta^{2,3}_{2,1}$&
$\beta^{1,4}_{1,1}$ & $\beta^{2,4}_{1,1}$ & $\beta^{1,4}_{2,1}$& $\beta^{2,4}_{2,1}$\\ \hline
0 & 0 & 0 & 0&0 & 0 & 0 & 0\\ \hline
$\beta^{1,5}_{1,1}$ & $\beta^{2,5}_{1,1}$ & $\beta^{1,5}_{2,1}$& $\beta^{2,5}_{2,1}$&
$\beta^{1,6}_{1,1}$ & $\beta^{2,6}_{1,1}$ & $\beta^{1,6}_{2,1}$& $\beta^{2,6}_{2,1}$\\ \hline
0 & $\kappa$ & 0 & $-\kappa$&0 & $-\kappa$ & 0 & $\kappa$\\ \hline
\end{tabular}
\caption{A choice of the set $\{\beta^{x,y}_{a,1}\}$ of the MDI-SM of the assemblage produced from the two-qubit Werner state.}
\label{table_beta}
\end{table}

\begin{equation}
\begin{aligned}
&(1-\gamma_{1|1}-\gamma_{1|2})\geq \parallel\vec{\kappa}_{1|1}+\vec{\kappa}_{1|2}\parallel=\sqrt{\kappa_{1|1}^2+\kappa_{1|2}^2},\\
&(1-\gamma_{1|1}-\gamma_{2|2})\geq \parallel\vec{\kappa}_{1|1}+\vec{\kappa}_{2|2}\parallel=\sqrt{\kappa_{1|1}^2+\kappa_{2|2}^2},\\
&(1-\gamma_{2|1}-\gamma_{1|2})\geq \parallel\vec{\kappa}_{2|1}+\vec{\kappa}_{1|2}\parallel=\sqrt{\kappa_{2|1}^2+\kappa_{1|2}^2},\\
&(1-\gamma_{2|1}-\gamma_{2|2})\geq \parallel\vec{\kappa}_{2|1}+\vec{\kappa}_{2|2}\parallel=\sqrt{\kappa_{2|1}^2+\kappa_{2|2}^2}.
\end{aligned}
\label{second_constraint}
\end{equation}
Combining Eqs.~\eqref{first_constraint} and \eqref{second_constraint}, the constraints can be reformulated as
\begin{equation}
\begin{aligned}
\sqrt{\kappa_{a_1|1}^2+\kappa_{a_2|2}^2}&\leq 1-\gamma_{a_1|1}-\gamma_{a_2|2}\\
&\leq 1-\kappa_{a_1|1}-\kappa_{a_2|2}~~\forall a_1,a_2\in\{1,2\}.
\end{aligned}
\label{final_constraint}
\end{equation}
From Eqs.~\eqref{alpha_W},~\eqref{first_constraint}, and \eqref{second_constraint}, we can see that $\gamma_{a|x}$ as well as $\kappa_{a|x}$ are permutation symmetrical to $a$ and $x$. Therefore, without loss of generality, we can assume that $\gamma_{a|x}=\gamma$ and $\kappa_{a|x}=\kappa$. Finally, from Eq.~\eqref{final_constraint} and the symmetric rule, we obtain the two inequalities as
\begin{equation}
\frac{1}{2+\sqrt{2}}\geq \kappa\geq 0~~\text{and}~~\frac{1-\sqrt{2\kappa^2}}{2}\geq \gamma\geq\kappa.
\label{gamma_beta}
\end{equation}
Since Eq.~\eqref{alpha_W} is a linear function of $\gamma$ and $\kappa$, the local maximal value takes place at the extremal points $\gamma=\kappa=(2+\sqrt{2})^{-1}$ of the constraint~\eqref{gamma_beta}. Therefore, the optimal $\{F_{a|x}\}$ in Eq.~\eqref{Eq_Fax_SR} is analytically constructed as
\begin{equation}
F_{a|x}=\frac{1}{2+\sqrt{2}}[\openone+(-1)^{a+1}\sigma_x]\quad~~\forall a,x.
\end{equation}

Now we take Bob's input quantum states $\{\omega_y\}$ to be the eigenstates of Pauli matrices, which form a tomographically complete set. The above steering functional can be spanned by this set:
\begin{equation}
F_{a|x}=\sum_y \beta^{x,y}_{a,1}\omega_y^\intercal\quad~~\forall a,x.
\end{equation}
Except a choice of the set $\{\beta^{x,y}_{a,1}\}$ shown in the main text, here we list an other feasible one in Table~\ref{table_beta}. The steerability of the assemblage created by the measurements on the two-qubit Werner state can then be obtained in a MDI scenario, which is shown in Fig.~\ref{fig_EPR_MDI} in the main text.

\end{document}